\documentclass[final]{dmtcs-episciences}
\usepackage[utf8]{inputenc}
\usepackage{subfigure}

\usepackage{crossreftools}
% % % % % % % % % % % % % % % % % % % % % % %  % % % % % % % % % % % % 
\usepackage{amsmath}
\usepackage{amsfonts}
\usepackage{amssymb} 
\usepackage{mathtools}
\usepackage[dvipsnames]{xcolor}  
\definecolor{darkgreen}{RGB}{0,150,0}
\usepackage{tikz}  
\usepackage{pgf}  
%\usetikzlibrary{arrows,positioning,shapes.geometric,calc, positioning, hobby,fit}   
\usepackage{xspace}

% % % % % % % % % % % % % % % % % % % % % % % % % % % % % % % % % % % 

%\spnewtheorem{myclaim}{Claim}{\bfseries}{\itshape} 
%\spnewtheorem{observation}{Observation}{}{\itshape} 

% % % % % % % % % % % % % % % % % % % % % % % % % % % % % % % % % % % 

\usepackage{hyperref}   
\usepackage{stmaryrd}  

% % % % % % % % % % % % % % % % % % % % % % % % % % % % % % % % % % %
\usepackage[english]{babel} 
% % % % % % % % % % % % % % % % % % % % % % % % % % % % % % % % % % % 
\usepackage{float}  % precise placement of figures 
% % % % % % % % % % % % % % % % % % % % % % % % % % % % % % % % % % % 

\usepackage[T1]{fontenc}
\usepackage[utf8]{inputenc}
\usepackage{microtype} 
\usepackage{paralist} 
% % % % % % % % % % % % % % % % % % % % % % % % % % % % % % % % % % % %
\usepackage{graphicx}
\usepackage{tikz}
\usetikzlibrary{arrows,positioning}   
% % % % % % % % % % % % % % % % % % % % % % % % % % % % % % % % % % %  
\usepackage{centernot} % negation of \rightarrow
% % % % % % % % % % % % % % % % % % % % % % % % % % % % % % % % % % %  
%  prelims

%\usepackage{etoolbox}
%\AtEndEnvironment{proof}{\phantom{}\qed}

\usepackage{dsfont} 
\newcommand{\struct}[1]{\mathfrak{#1}} 

\newcommand{\Models}[1]{\llbracket{#1}\rrbracket^{<\omega}}

\newcommand{\ignore}[1]{}

% underscore 

% relevant for toy example

%
% only relevant for ALC

\newcommand{\EL}{\ensuremath{\mathcal{EL}}\xspace}  

% only relevant for lutz milicic algo

\usepackage{amsthm}
 
\theoremstyle{plain}

\newtheorem{theorem}{Theorem}
\newtheorem{proposition}{Proposition}
\newtheorem{lemma}{Lemma}
\newtheorem{corollary}{Corollary}
\newtheorem{definition}{Definition} 
\newtheorem{myclaim}{Claim}  
\newtheorem{observation}{Observation}

\theoremstyle{remark}
\newtheorem{example}{Example} 

\theoremstyle{definition}
\newtheorem{question}{Question} 

\usepackage{thmtools, thm-restate}

\newcommand{\bA}{\mathfrak{A}}
\newcommand{\bB}{\mathfrak{B}}
\newcommand{\bC}{\mathfrak{C}}

\usepackage{relsize}
\newcommand{\of}[1]{\textup{\relsize{-2}\raisebox{0.5pt}{[}\kern0.025em$#1$\kern0.025em\raisebox{0.5pt}{]}}}

\newcommand{\boldf}[1]{\bar{#1}}

% % % % % % % % % % % % % % % % % % % % % % % % % % % % % % % % % % % % % % % % % %

% % % % % % 
%\usepackage[backend=bibtex,style=numeric]{biblatex}
%\addbibresource{HomogeneousConcrete.bib}
%\def\bibfont{\footnotesize}
% % % % % %  

\usepackage{microtype}
	
	\author{Manuel Bodirsky\affiliationmark{1}\thanks{Manuel Bodirsky has received funding from the European Research Council through the ERC Consolidator Grant 681988 (CSP-Infinity).}  \and Jakub Rydval\affiliationmark{2}\thanks{Jakub Rydval is supported by DFG GRK 1763 (QuantLA)}   \and Andr\'e Schrottenloher\affiliationmark{3}\thanks{Andr\'e Schrottenloher is supported by ERC-ADG-ALGSTRONGCRYPTO (project 740972).}}

\title[Universal Horn Sentences  
and the Joint Embedding Property]{Universal Horn Sentences and the Joint Embedding Property}
	
	\affiliation{
  % one line per affiliation, no postal codes, grant numbers or similar
  Institute of Algebra, TU Dresden, Germany%, \texttt{manuel.bodirsky@tu-dresden.de} 
  \\
  Institute of Theoretical Computer Science, TU Dresden, Germany%, \texttt{jakub.rydval@tu-dresden.de}
  \\
 Cryptology Group, CWI, Amsterdam, The Netherlands%, \texttt{andre.schrottenloher@m4x.org}
}

%\received{1998-10-14}
%\revised{2002-07-19, 2014-02-05, 2015-09-09}
%\accepted{2015-09-09}

\received{2021-05-02}
\revised{2021-11-19, 2022-03-09, 2022-04-08}
\accepted{2022-04-09}

\begin{document}

%\publicationdetails{VOL}{2015}{ISS}{NUM}{SUBM}

\publicationdetails{23}{2022}{2}{4}{7435}

	\maketitle            
\begin{abstract}The finite models of a universal sentence $\Phi$ in a finite relational signature are the age of a structure if and only if $\Phi$ has the \emph{joint embedding property}. 
	We prove that the computational problem whether a given universal sentence $\Phi$ has the joint embedding property is undecidable, even if $\Phi$ is additionally Horn and the signature of $\Phi$ only contains relation symbols of arity at most two. 
	\keywords{joint embedding property \and universal \and Horn}
\end{abstract}

\section{Introduction} 
%
%
%
%The preservation theorem of McKinsey states that a universal sentence $\Phi$ is equivalent to a universal Horn sentence if and only if the class of finite models of $\Phi$ is closed under direct products~\cite{HodgesLong}. 
%
The \emph{age} of a relational structure $\struct{D}$ is a term from model-theory used for the class of all finite  structures that can be embedded to $\struct{D}$.
Universal Horn sentences which describe the age of a relational structure are particularly relevant in theoretical computer science; we mention two contexts where they appear. 

The \emph{constraint satisfaction problem} (CSP) of a structure $\struct{D}$ with a finite relational signature $\tau$ is the problem of deciding whether a given finite $\tau$-structure $\bA$ has a homomorphism to $\struct{D}$.
If the domain of $\struct{D}$ is finite, then the CSP of $\struct{D}$ is clearly in NP. Moreover, finite-domain CSPs admit a dichotomy between P and NP \cite{bulatov2017dichotomy,zhuk2017proof}.
In the infinite, the situation is much more complicated and the CSP might even be undecidable, e.g., the solvability of Diophantine equations can be represented as the CSP of $(\mathbb{Z};+,\cdot,\{0\},\{1\})$.
What one usually does when studying infinite-domain CSPs is to restrict the set of admissible templates $\struct{D}$ in order to guarantee a certain upper bound on complexity while still being able to model interesting decision problems.
One such restriction is requiring the age of $\struct{D}$ to be definable by a universal first-order sentence, we then say that $\struct{D}$ is \emph{finitely bounded}.
In this case, the CSP of $\struct{D}$ is in NP. For every satisfiable instance $\struct{A}$, we can guess a finite structure $\struct{B}$ which admits a surjective homomorphism from $\struct{A}$ and  embeds into $\struct{D}$.
To verify that $\struct{B}$ indeed embeds into $\struct{D}$, we must only confirm that it satisfies the fixed universal sentence, which can be done in time polynomial in the size of $\struct{A}$. 
If the age of $\struct{D}$ is definable by a universal Horn sentence, then the CSP of $\struct{D}$ is even guaranteed to be solvable in polynomial time by a Datalog program.
Here, we would simply saturate the relations of $\struct{A}$ under application of the fixed set of  Horn rules and reject if and only if we can derive the empty clause. 
A wide variety of decision problems are CSPs of reducts of finitely bounded structures, e.g., digraph acyclicity or satisfiability of a fixed connected MMSNP sentence \cite{Book}.
However, in general it is not clear which universal (Horn) sentences actually describe the age of a relational structure, and this question can be considered as an interesting decision problem if one intends to study the CSPs of reducts of finitely bounded structures.

\emph{Description logics} are knowledge representation languages frequently used to formalize
ontologies.
To enable reference to concrete objects and predefined predicates on these objects, a description logic can be extended by a concrete domain, which is usually represented by a fixed relational structure. 
By a result in~\cite{PReport}, if the age of a relational structure $\struct{D}$ is defined by a universal sentence $\Phi$, then $\struct{D}$ yields a tractable concrete domain extension of the description logic $\EL$  if and only if $\Phi$ is equivalent to a universal Horn sentence. 
To enable user-definability of concrete domains for $\EL$, one would need an algorithm accepting precisely those universal Horn sentences which describe the age of a relational structure.

We show that the question whether a given universal sentence is equivalent to a universal Horn sentence can be decided effectively. 
This motivates the question whether one can also decide which universal Horn sentences describe the age of a relational structure.
By a classical result of Fra\"{i}ss\'{e}, the finite models of a universal sentence $\Phi$ in a finite relational signature are the \emph{age} of some structure  if and only if $\Phi$ has the \emph{joint embedding property}~\cite{HodgesLong}. 
The problem whether a given universal sentence $\Phi$ has the joint embedding property is undecidable by a result of~\cite{BraunfeldUndec}, but the theories used by Braunfeld to prove the undecidability are not universal Horn. 
We prove that the joint embedding property remains undecidable even if the given universal sentence is additionally Horn, and if the signature is binary. Our proof is based on the undecidability of universality of context-free languages and shorter than the argument of Braunfeld.

\section{Preliminaries} 
The set $\{1,\dots,n\}$ is denoted by $[n]$.   
We often use the bar notation for tuples; for a tuple $\boldf{a}$ indexed by a set $I$, 
the value of $\boldf{a}$ at the position $i\in I$ is denoted by  $\boldf{a}\of{i}$.
For a function $f\colon A \rightarrow B$ and a tuple $\boldf{a}\in A^k$, we set $f(\boldf{a})\coloneqq \big( f(\boldf{a}\of{1}),\dots, f(\boldf{a}\of{k}) \big)$.
We always use capital Fraktur letters $\bA$, $\bB$, $\bC$, etc to denote structures, and the corresponding capital Roman letters $A$, $B$, $C$, etc for their domains.

Let $\tau$ be a finite relational signature.
%
% An \emph{atomic $\tau$-formula}, or \emph{$\tau$-atom} for short, is a formula of the form $x=y$ or of the form $R(x_1,\dots,x_n)$ for $R \in \tau$ and (not necessarily distinct) variables $x_1,\dots,x_n$. 
%
An \emph{atomic $\tau$-formula}, or \emph{$\tau$-atom} for short, is a formula of the form $R(x_1,\dots,x_n)$ for $R \in \tau$ and (not necessarily distinct) variables $x_1,\dots,x_n$.
Note that in this text equality atoms, i.e., expressions of the form $x=y$, are not permitted.
A class $\mathcal{C}$ of relational $\tau$-structures has the \emph{joint embedding property (JEP)} if 
for every pair $\struct{B}_1,\struct{B}_2 \in \mathcal{C}$, there exists $\struct{C}\in \mathcal{C}$ and   embeddings $f_{i}\colon \struct{B}_i \hookrightarrow \struct{C}$ ($i\in [2]$).
It is closed under forming \emph{disjoint unions} if  $\struct{C} \in \mathcal{C}$ and $f_{i}\colon \struct{B}_i \hookrightarrow \struct{C}$ can be chosen so that $f_1(B_1)\cup f_2(B_2)=C$, $f_1(B_1)\cap f_2(B_2) = \emptyset$, and $R^{\struct{C}} = f_1(R^{\struct{B}_1})\cup f_2(R^{\struct{B}_2})$ for every $R\in \tau$.

If $\bA$ is a $\tau$-structure and $\Phi$ is a $\tau$-sentence, we write $\bA \models \Phi$ if $\bA$ satisfies $\Phi$. If $\Phi$ and $\Psi$ are $\tau$-sentences, we write $\Phi \models \Psi$
if every $\tau$-structure that satisfies $\Phi$ also satisfies $\Psi$. 
For universal sentences, this is equivalent
to the statement that every \emph{finite}
$\tau$-structure that satisfies $\Phi$ also satisfies $\Psi$, by a standard application of the  compactness theorem of first-order logic. 
If $\Phi$ is a $\tau$-sentence, we denote the class of all finite models of $\Phi$ by $\Models{\Phi}$.
Clearly, every universal formula can be written as
$\forall x_1,\dots,x_n\ldotp  \psi$ where $\psi$ is quantifier-free and in conjunctive normal form. 
Conjuncts of $\psi$ are also called \emph{clauses}; clauses are disjunctions of atomic formulas and negated atomic formulas.
A clause $\phi$ is a \emph{weakening} of a clause $\psi$ if every disjunct of $\psi$ is also a disjunct of $\phi$.
We sometimes omit universal quantifiers in universal sentences (all first-order variables are then implicitly universally quantified). 
It is well-known that a class $\mathcal{C}$ of finite $\tau$-structures can be described by forbidding isomorphic copies of fixed finitely many finite $\tau$-structures if and only if $\mathcal{C} = \Models{\Phi}$ for some universal $\tau$-sentence $\Phi$.
With this observation, it is easy to see that Braunfeld~\cite{BraunfeldUndec} proved undecidability for the 
following decision problem. His proof is based on a reduction from the unrestricted tiling problem.  

\par\medskip  \noindent 
INPUT: A universal sentence $\Phi$ in a finite relational signature. \\
QUESTION: Does $\Models{\Phi}$ have the JEP?
\medskip 

%We assume that equality $=$ is always available when building formulas; thus, atomic formulas are of the form $x_{i}=x_{j}$ for $i,j \in \{1,\dots,n\}$, or $R(y_{i_1},\dots ,y_{i_k})$ for some relation symbol $R$ of arity $k$ and $i_1,\dots,i_k \in \{1,\dots,n\}$. 
%
A universal sentence $\Phi$ is called \emph{Horn} if its
quantifier-free part is a conjunction of \emph{Horn clauses}, i.e., disjunctions of positive or negative atomic formulas each having at most one \emph{positive} disjunct. Every Horn clause can be written equivalently as an implication  
$$\phi_1 \wedge \cdots \wedge \phi_n \Rightarrow \phi_0$$
where $\phi_1,\dots,\phi_n$ are atomic formulas (possibly $n=0$), and $\phi_0$ is either an atomic formula or $\bot$ which stands for the empty disjunction.
We refer to $\phi_1 \wedge \cdots \wedge \phi_n$ as the \emph{premise}, and to $\phi_0$ as the \emph{conclusion} of the Horn clause. 
A Horn clause over a signature $\tau$ is called a \emph{tautology} if it holds in all $\tau$-structures, i.e., if the conclusion equals one of the conjuncts of the premise.

We say that a Horn clause $\phi_1 \wedge \cdots \wedge \phi_n \Rightarrow \phi_0$ \emph{can be applied} to a structure $\struct{A}$ if $\phi_1 \wedge \cdots \wedge \phi_n \wedge \neg \phi_0$ is satisfiable in $\struct{A}$.
%  
	%We say that a $\tau$-structure $\struct{A}'$ is \emph{obtained from $\struct{A}$ by applying $\alpha$} if 
	  %$\phi_0$ is of the form $R(x_{i_1},\dots,x_{i_k})$ where $x_1,\dots,x_m$ are all the variables of $\alpha$, $A=A'$, and there exists a tuple $\boldf{a} \in A^m$ such that 
	%$\struct{A}\models (\phi_1 \wedge \cdots \wedge \phi_n)(\boldf{a})$, $R^{{\struct A}'} = R^{\struct{A}} \cup \{(\boldf{a}\of{i_1},\dots,\boldf{a}\of{i_k}) \}$, and $S^{{\struct A}'} = S^{\struct{A}}$ for all $S \in \tau \setminus \{R\}$. 
	%	
	Let $\Phi$ be a Horn sentence. We say that $\struct{A}$ is \emph{closed under application of Horn clauses from $\Phi$} if no conjunct of $\Phi$ is applicable to $\struct{A}$, i.e., if $\struct{A}$ is a model of $\Phi$.

\begin{definition} Let $\Phi$ be a universal Horn sentence and $\psi$ a Horn clause, both in a fixed relational signature $\tau$. An \emph{SLD-derivation of $\psi$ from $\Phi$ of length $k$} is a finite sequence of Horn clauses $\psi_0,\dots, \psi_k = \psi$ such that $\psi_0$ is a conjunct in $\Phi$ and each $\psi_i$ ($1\leq i \leq k$) is a \emph{(binary) resolvent} of $\psi_{i-1}$ and a conjunct $\phi_{i}$ from $\Phi$, i.e., for some atomic formula $\psi^{j}_{i-1}$, we have
	\begin{center} 
		\begin{tabular}{c} $\overset{\psi_{i-1}}{\overbrace{ \psi_{i-1}^1 \wedge  \cdots \wedge     \psi_{i-1}^j\wedge  \cdots \wedge   \psi_{i-1}^{n_{i-1}}  \Rightarrow \psi_{i-1}^0}}   \qquad   \overset{\phi_{i}}{\overbrace{ \phi_{i}^1 \wedge  \cdots \wedge    \phi_{i}^{m_{i}} \Rightarrow \psi_{i-1}^j}}$ \\[0.5em]  \hline \\[-0.5em] 
			$\underset{\psi_i}{\underbrace{\psi_{i-1}^1 \wedge  \cdots \wedge   \psi_{i-1}^{j-1} \wedge   \phi_{i}^1  \wedge  \cdots \wedge   \phi_{i}^{m_{i}}\wedge     \psi_{i-1}^{j+1}\wedge \cdots \wedge   \psi_{i-1}^{n_{i-1}} \Rightarrow \psi_{i-1}^0 }}	$
		\end{tabular} 
	\end{center} 
	There exists an \emph{SLD-deduction} of $\psi$ from $\Phi$, written as $\Phi \vdash \psi$, if $\psi$ is a tautology  or a weakening of  a Horn clause that has an SLD-derivation from $\Phi$ up to 
	%a substitution of variables.   
	renaming variables.
\end{definition}

The following theorem presents a fundamental property of universal Horn sentences: that SLD-deduction is a sound and complete calculus for entailment of Horn clauses by Horn sentences.
% \red{I don't think the statement is correct if we allow equality atoms. Probably the best solution is to additionally allow the application of some Horn clauses that code how equality behaves.}

\begin{theorem}[Theorem~7.10 in \cite{NienhuysW97}]  \label{SLD-deduction}Let $\Phi$ be a universal Horn sentence and $\psi$ a Horn clause, both in a fixed signature $\tau$. Then \[\Phi \models \psi \quad \text{iff}\quad \Phi \vdash  \psi. \]
\end{theorem}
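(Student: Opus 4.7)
The plan is to prove soundness and completeness separately. For soundness ($\Phi \vdash \psi$ implies $\Phi \models \psi$), I would argue by a routine induction on the length $k$ of an SLD-derivation $\psi^0,\dots,\psi^k$. Tautologies are trivially entailed, and $\psi^0$ is a conjunct of $\Phi$ and hence entailed by $\Phi$; the inductive step reduces to the observation that a binary resolvent of two Horn clauses is semantically entailed by them, which one verifies directly by writing both parents as implications and chaining through the shared atom. Since weakening and variable renaming obviously preserve entailment, this closes the direction.

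For completeness ($\Phi \models \psi$ implies $\Phi \vdash \psi$), the plan is a canonical-model construction. Write $\psi$ as $(\phi_1 \wedge \cdots \wedge \phi_n) \Rightarrow \phi_0$ on variables $x_1,\dots,x_m$, and let $\bA$ be the $\tau$-structure on domain $\{x_1,\dots,x_m\}$ whose relations consist exactly of the tuples of variables appearing in $\phi_1,\dots,\phi_n$. Close $\bA$ iteratively under the Horn clauses of $\Phi$: at each step, for some clause of $\Phi$ and some match of its premise in the current structure, add the conclusion to the appropriate relation and record the firing, thereby building a forest of derivation trees whose leaves are atoms from $\{\phi_1,\dots,\phi_n\}$ and whose internal nodes are instances of clauses of $\Phi$. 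Either no clause of $\Phi$ with conclusion $\bot$ ever fires, in which case the limit is a model of $\Phi$ satisfying the premise of $\psi$ and therefore (by hypothesis) its conclusion $\phi_0$; or some clause with conclusion $\bot$ does fire. In either case one extracts a finite derivation tree rooted at $\phi_0$ (respectively at $\bot$) whose leaves all belong to $\{\phi_1,\dots,\phi_n\}$.

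The final step is to flatten this tree into an SLD-derivation matching the paper's definition. Take $\psi^0$ to be the clause labelling the root, and then repeatedly pick a premise atom of the current $\psi^{i-1}$ that still labels an internal node of the tree and resolve it away with the clause labelling that node; this move is exactly the binary-resolvent format in the definition. When every premise atom corresponds to a leaf, the resulting clause has the form $(\phi_{i_1}\wedge \cdots \wedge \phi_{i_\ell}) \Rightarrow \phi_0$ (or $\Rightarrow \bot$) with $\{\phi_{i_1},\dots,\phi_{i_\ell}\}\subseteq\{\phi_1,\dots,\phi_n\}$, so $\psi$ is a weakening of it; in the $\bot$ case we are additionally allowed to change the conclusion to $\phi_0$, since $\bot$ contributes no disjunct. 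The main obstacle is the variable bookkeeping: clauses of $\Phi$ are universally quantified, so each firing during the saturation happens at a specific instantiation of that rule's variables, and to obtain an SLD-derivation in the defined sense one must rename the variables of each used rule to fresh copies and keep the instantiations consistent along the chain of resolvents, i.e., thread a most-general unifier through the tree. This is routine but needs care, and is the reason the definition permits renaming of variables in the notion of SLD-deduction.
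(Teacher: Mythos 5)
The paper offers no proof of this theorem: it is imported wholesale as Theorem~7.10 of \cite{NienhuysW97} (the subsumption theorem for SLD-resolution, specialized to Horn clauses), so there is no in-paper argument to compare against. Your sketch is the standard proof of that result and is essentially correct. Soundness by induction on derivation length, via the observation that a binary resolvent is entailed by its two parents and that weakening and renaming preserve entailment, is fine. For completeness, the canonical-model (chase) argument works because the domain $\{x_1,\dots,x_m\}$ and the relational signature are finite, so the saturation terminates; if no clause with conclusion $\bot$ fires, the limit is a finite model of $\Phi$ satisfying the premise of $\psi$ under the identity assignment, hence also $\phi_0$, and in either case one extracts a finite derivation tree and flattens it into the paper's binary-resolvent format. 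Your remark that in the $\bot$ case the derived clause has no positive disjunct, so that $\psi$ is still a weakening of it, is exactly right. Two points deserve explicit attention. First, the degenerate case where $\phi_0$ already occurs among $\phi_1,\dots,\phi_n$ (so the extracted ``tree'' is a single leaf and there is no derivation step) must be routed through the tautology clause of the definition of $\vdash$; you do this only implicitly. Second, the variable bookkeeping you flag is not merely routine: the rules of $\Phi$ fire in the canonical structure under substitutions that may \emph{identify} variables, so the clauses labelling the tree nodes are variable-identifying instances of conjuncts of $\Phi$, not bijective renamings. Indeed, for $\Phi = \forall x,y\,(R(x,y)\Rightarrow S(x))$ and $\psi = \big(R(x,x)\Rightarrow S(x)\big)$ we have $\Phi\models\psi$, but $\psi$ is not a weakening of any injective renaming of the conjunct; the theorem only holds if ``up to renaming variables'' is read as permitting arbitrary variable-for-variable substitutions (equivalently, if one threads most general unifiers through the tree and applies the resulting substitution at the end, as in the subsumption theorem proper). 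With that reading, which is clearly the intended one, your lifting step goes through and the proof is complete.
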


\section{Universal Horn Sentences and the JEP}

This section deals with two separate topics.
First, we discuss the semantical difference between universal sentences and universal Horn sentences which is described by the convexity condition.
Second, we present some important observations about the JEP in the context of universal Horn sentences which we later use in the proof of our main result.
Let $\tau$ be a relational signature. 

A universal $\tau$-sentence $\Phi$ is called \emph{convex} if for all atomic $\tau$-formulas $\phi_1,\dots,\phi_n$, $\psi_1,\dots,\psi_k$, if
$$\Phi \models \bigwedge_{i \in [n]} \phi_i \Rightarrow \bigvee_{j \in [k]} \psi_j$$
then there exists $j \in [k]$ such that
$$\Phi \models \bigwedge_{i \in [n]} \phi_i \Rightarrow \psi_j.$$
We say that $\Phi$ is \emph{preserved in products}
if for every non-empty family $(\bA_i)_{i \in I}$ of models of $\Phi$, the direct product $\prod_I \bA_i$ is also a model of $\Phi$. 
The following is well known; e.g., the direction  \eqref{ctr3}$\Rightarrow$\eqref{ctr1} is Corollary~9.1.7 in~\cite{HodgesLong}. Since we also need \eqref{ctr4} we provide a proof for the convenience of the reader. 

\begin{theorem}[McKinsey]\label{convexity}
	Let $\Phi$ be a universal sentence with relational signature $\tau$. Then the following are equivalent. 
	\begin{enumerate}
		\item \label{ctr1} $\Phi$ is convex; 
		\item \label{ctr2} $\Phi$ is equivalent to a universal Horn sentence; 
		\item \label{ctr3} $\Phi$ is preserved in products;
		\item \label{ctr4} $\Phi$ is preserved in binary products of finite structures. 
	\end{enumerate}
	%\begin{enumerate} 
	%\item $\Phi$ is preserved in arbitrary products;
	%\item $\Phi$ is preserved in binary products;
	%\item $\Phi$ is convex. 
	%\end{enumerate}
\end{theorem}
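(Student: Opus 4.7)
The plan is to prove the cycle \eqref{ctr1}$\Rightarrow$\eqref{ctr2}$\Rightarrow$\eqref{ctr3}$\Rightarrow$\eqref{ctr4}$\Rightarrow$\eqref{ctr1}, which gives all four equivalences and in particular re-derives \eqref{ctr3}$\Rightarrow$\eqref{ctr1}. The implication \eqref{ctr3}$\Rightarrow$\eqref{ctr4} is immediate (a binary product of finite structures is a special case of a product over a non-empty family).

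For \eqref{ctr1}$\Rightarrow$\eqref{ctr2}, I would write the quantifier-free part of $\Phi$ in CNF and process each clause $C$ of the form $\bigwedge_i \alpha_i \Rightarrow \bigvee_j \beta_j$ separately. If $C$ has no positive disjunct it is already Horn. Otherwise the hypothesis $\Phi \models C$ together with convexity yields a particular index $j_C$ such that $H_C \coloneqq (\bigwedge_i \alpha_i \Rightarrow \beta_{j_C})$ is a Horn clause entailed by $\Phi$. The conjunction $\Phi^H \coloneqq \bigwedge_C H_C$ is a universal Horn sentence, it is entailed by $\Phi$, and because $H_C$ logically implies $C$, we also have $\Phi^H \models \Phi$. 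Hence $\Phi \equiv \Phi^H$.

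For \eqref{ctr2}$\Rightarrow$\eqref{ctr3}, I would check a single Horn clause $\bigwedge_j \alpha_j \Rightarrow \beta$ (with $\beta$ either atomic or $\bot$) in the product $\prod_I \bA_i$. The key observation is that an atomic formula holds at a tuple of the product iff it holds coordinate-wise at the corresponding projected tuples; so the premise holds at $\bar b$ in the product exactly when it holds at $\bar b_i$ in each $\bA_i$, and then the conclusion holds in each $\bA_i$ (impossible if $\beta = \bot$, giving vacuity; pointwise-holding if $\beta$ is atomic, hence holding in the product). This is routine and I would present it compactly.

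The main step is \eqref{ctr4}$\Rightarrow$\eqref{ctr1}, which I would prove by contraposition. Suppose convexity fails: $\Phi \models \bigwedge_i \phi_i \Rightarrow \bigvee_{j\leq k} \psi_j$ but for each $j \in [k]$, $\Phi \not\models \bigwedge_i \phi_i \Rightarrow \psi_j$. Both the entailed sentence and its $j$-witnesses are universal, so by the finite-model version of entailment recorded earlier in the paper, for each $j$ there is a \emph{finite} model $\bA_j \models \Phi$ together with a tuple $\bar a_j$ satisfying all $\phi_i$ but not $\psi_j$. By induction on $k$, iterated binary products of finite models of $\Phi$ are again finite models of $\Phi$, so $\bB \coloneqq \bA_1\times\cdots\times\bA_k$ is a finite model of $\Phi$. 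Defining $\bar b$ coordinatewise from $\bar a_1, \ldots, \bar a_k$, the coordinatewise description of atomic formulas in a product forces $\bB \models \phi_i(\bar b)$ for all $i$ but $\bB \not\models \psi_j(\bar b)$ for every $j\in [k]$ (since the $j$-th coordinate fails $\psi_j$), contradicting the entailment. The only delicate point, which I would flag briefly, is the reduction to finite counterexample structures via compactness for universal sentences; beyond that the argument is purely combinatorial.
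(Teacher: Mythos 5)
Your proof is correct and follows essentially the same route as the paper: the same cycle \eqref{ctr1}$\Rightarrow$\eqref{ctr2}$\Rightarrow$\eqref{ctr3}$\Rightarrow$\eqref{ctr4}$\Rightarrow$\eqref{ctr1}, with the same convexity-based clause-by-clause rewriting for \eqref{ctr1}$\Rightarrow$\eqref{ctr2} and the same finite-counterexample-plus-iterated-product contraposition for \eqref{ctr4}$\Rightarrow$\eqref{ctr1}. The only difference is that for \eqref{ctr2}$\Rightarrow$\eqref{ctr3} you spell out the routine coordinatewise verification where the paper simply cites Corollary~9.1.6 of Hodges; your argument there is the standard one and is fine.
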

\begin{proof} ``\ref{ctr1}$\Rightarrow$\ref{ctr2}'': We may assume that $\Phi$ is in prenex normal form and that its quantifier-free part $\phi$ is in conjunctive normal form.
	Every conjunct in $\phi$ is equivalent to an implication of the form $\bigwedge_{i \in [n]} \phi_i \Rightarrow \bigvee_{j \in [k]} \psi_j.$
	Since $\Phi$ is convex, we can replace the conjunct by $ \bigwedge_{i \in [n]} \phi_i \Rightarrow \psi_j$ for some $j\in [k]$.
	In this way, $\Phi$ can be rewritten into an equivalent universal Horn sentence.

	``\ref{ctr2}$\Rightarrow$\ref{ctr3}'':  Corollary~9.1.6 
	in~\cite{HodgesLong}.  
	
	``\ref{ctr3}$\Rightarrow$\ref{ctr4}'': This direction is trivial.
	
	``\ref{ctr4}$\Rightarrow$\ref{ctr1}'': Suppose that $\Phi$  has $m$ variables  and is not convex, i.e., $\Phi \models \bigwedge_{i \in [n]} \phi_i \Rightarrow \bigvee_{j \in [k]} \psi_j$ but, for every $j \in [k]$,  there exists a model $\struct{A}_j$ of $\Phi$
	such that $\struct{A}_j \models  (\bigwedge_{i \in [n]} \phi_i \wedge \neg \psi_j)(\boldf{t}_j)$ for some tuple $\boldf{t}_{j} \in A_j^m$.
	We may assume that each $\struct{A}_{j}$ is finite; otherwise we replace it with its substructure on the coordinates of $\boldf{t}_j$ while preserving the desired properties.
	For $\boldf{s}_{j} \coloneqq ((\boldf{t}_1\of{1},\dots,\boldf{t}_j\of{1}),\dots,(\boldf{t}_1\of{m},\dots,\boldf{t}_j\of{m}))$ we have 
	$$\prod_{i \in [j]} \struct{A}_{i} \models   \Big(\bigwedge_{i \in [n]} \phi_i \wedge \bigwedge _{i \in [j]} \neg \psi_{i}\Big)(\boldf{s}_{j}).$$
	It follows by induction on $j \in [k]$ that if $\Phi$ is preserved in binary products of finite structures, then $\prod_{i \in [j]} \struct{A}_{i} \models \Phi$. We then obtain a contradiction for  $j=k$.  
\end{proof}

It is natural to consider the following question as a decision problem.

\par\medskip  \noindent 
INPUT: A universal sentence $\Phi$ in a finite relational signature. \\
QUESTION: Is $\Phi$ equivalent to a universal Horn sentence?
\medskip 

%\noindent  In the context of the results of \cite{PReport}, one might also be interested in the variant where the input is restricted to those $\Phi$ for which  $\Models{\Phi}$ has the JEP. 

\begin{proposition} \label{prop:equivalence_horn}
	Deciding whether a given universal sentence $\Phi$ is equivalent to a universal Horn sentence is   $\Pi^p_2$-complete. The problem is $\Pi^p_2$-hard even when the signature is limited to unary relation symbols. 
\end{proposition}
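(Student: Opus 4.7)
The plan is to invoke Theorem~\ref{convexity} so that being Horn-equivalent becomes the same as being convex, then handle the two complexity bounds separately.

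For membership in $\Pi^p_2$, assume $\Phi$ is in prenex CNF whose conjuncts are clauses of the form $C = (\bigwedge_{i \leq n_C} \phi_i \Rightarrow \bigvee_{j \leq k_C} \psi_j)$. By \eqref{ctr1}$\Leftrightarrow$\eqref{ctr2} of Theorem~\ref{convexity}, $\Phi$ is Horn-equivalent iff for every such clause $C$ there exists $j \leq k_C$ with $\Phi \models (\bigwedge_{i \leq n_C}\phi_i \Rightarrow \psi_j)$. I would show that each of these entailments is in $\Pi^p_2$. A violation is witnessed by a finite $\bA$ with $\bA \models \Phi$ and a tuple $\bar a$ such that $\bA \models (\bigwedge_{i \leq n_C}\phi_i \wedge \neg \psi_j)(\bar a)$; by the same substructure argument used in \eqref{ctr4}$\Rightarrow$\eqref{ctr1} of Theorem~\ref{convexity} we may take $\bA$ to be the substructure on the entries of $\bar a$, whose size is bounded by the number of variables in $C$ and hence polynomial in $|\Phi|$. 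Verifying that such a polynomial-size $\bA$ is a model of $\Phi$ is coNP because the universal prefix of $\Phi$ may have polynomially many variables and thus range over exponentially many tuples, while the literals on $\bar a$ are checkable in polynomial time. Hence each entailment is $\Pi^p_2$, and convexity is a polynomial conjunction of polynomial disjunctions of such entailments and therefore still in $\Pi^p_2$.

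For $\Pi^p_2$-hardness I would reduce from the canonical $\Pi^p_2$-complete QBF $\forall \bar x\, \exists \bar y\, F(\bar x, \bar y)$ with $F$ in CNF. The target instance has a unary signature containing one unary predicate per propositional variable of $\bar x \cup \bar y$ together with auxiliary "flag" predicates. The sentence $\Phi$ would consist of Horn rules enforcing that each element of a model carries a coherent type representing a full candidate $(\bar x,\bar y)$-assignment and that the set of element-types realised in any model is suitably rich, together with a single non-Horn clause whose disjunctive conclusion ranges over polynomially many candidates for a "default Horn strengthening''. The encoding is arranged so that the outer $\forall \bar x$ of the QBF is simulated by the universal quantification over finite models of $\Phi$ inside the entailment test $\Phi \models \cdot$, while the inner $\exists \bar y$ corresponds to the existence, within such a model, of an element whose type extends the given $\bar x$-part by a satisfying $\bar y$-part; thereby $\Phi$ is convex exactly when the QBF is true.

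The main obstacle is simulating the alternation $\forall \bar x\, \exists \bar y$ using only unary relation symbols. With a binary signature one could directly link an element carrying an $\bar x$-guess to a witness element carrying a $\bar y$-answer; with only unary predicates, the entire $(\bar x,\bar y)$-pairing must be stored inside the type of a single element, and additional Horn axioms are needed to force every finite model to realise all necessary combinations of types so that the coNP-level check inside the $\Pi^p_2$ entailment faithfully mirrors the $\exists \bar y$-quantifier. Making this simulation precise while keeping $\Phi$ universal, keeping the clause structure amenable to the McKinsey-style analysis, and ensuring the reduction remains polynomial, is the technical heart of the proof.
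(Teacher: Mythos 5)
Your $\Pi^p_2$ membership argument is correct, and it takes a mildly different route from the paper's: you test, conjunct by conjunct, whether some Horn strengthening of each clause of $\Phi$ is entailed by $\Phi$ (which suffices, since replacing each conjunct by an entailed Horn strengthening yields an equivalent Horn sentence, and conversely Horn-equivalence gives convexity), whereas the paper uses characterization~(4) of Theorem~\ref{convexity} and guesses two models of size at most $n$ whose product violates $\Phi$. The load-bearing steps are the same in both: a substructure argument bounding the witness size polynomially, and a coNP check that the guessed small structure is a model of $\Phi$; your closure argument for polynomial conjunctions and disjunctions of $\Pi^p_2$ predicates is fine. (One small omission: conjuncts with no positive disjunct are already Horn and must be exempted from the ``exists $j$'' test, which otherwise fails vacuously.)

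The hardness half, however, has a genuine gap, and you concede it yourself: the simulation of the quantifier alternation with unary predicates is precisely what is not constructed, so there is no reduction yet. For comparison, the paper's gadget is far lighter than your sketch anticipates. It reduces from the complement of $\exists\forall$SAT over the unary signature $\{C_1,\dots,C_k,C,L,R\}$, and $\Phi$ is a \emph{single} universally quantified clause $\big(\neg C(x)\wedge C(y)\big)\Rightarrow\big(\psi(x,x_{k+1},\dots,x_\ell)\wedge(L(x)\vee R(x))\big)$, where $\psi$ replaces each existential variable $X_i$ by $C_i(x)$ and each universal variable $X_i$ by $C(x_i)$; the only non-Horn content is the disjunction $L(x)\vee R(x)$. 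A witness for the existential block is hard-wired into the $C_i$-colours of two two-element models $\struct{A}_L$ and $\struct{A}_R$ that differ only in whether $L$ or $R$ holds at the $\neg C$-element; their product satisfies the premise at $((a_1,a_1),(a_2,a_2))$ but neither $L$ nor $R$, so $\Phi$ is not preserved under products and hence not Horn-equivalent. If instead the $\exists\forall$ instance is unsatisfiable, $\Phi$ collapses to the Horn sentence $\forall x,y\,(C(y)\Rightarrow C(x))$. Note that the polarity is the opposite of what you propose: the \emph{existential} block of the QBF is absorbed into the choice of models, and the \emph{universal} block is absorbed into the universally quantified first-order variables $x_{k+1},\dots,x_\ell$, which in a two-element model range over the $C$- and $\neg C$-elements and thereby enumerate all extensions of the chosen assignment. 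If you wish to salvage your own plan, reversing this assignment of roles is the first thing to fix; as written, the proposal does not establish $\Pi^p_2$-hardness.
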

\begin{proof}  We first prove containment in $\Pi^p_2$. For a given universal sentence $\Phi$, let  $\phi(x_1,\dots,x_n)$ be the quantifier-free part of $\Phi$. 
	If $\Phi$ is not equivalent to a universal Horn sentence, then, by Theorem~\ref{convexity}, $\Phi$ has two finite models $\struct{A},\struct{B}$ such that $\struct{A}\times \struct{B}$ is not a model of $\Phi$.
	This means that there exists $\boldf{t}\in (A\times B)^n$ such that $\struct{A}\times \struct{B} \centernot{\models} \phi(\boldf{t})$.
	But then, by the definition of product of structures, there exist substructures $\struct{A}'$ of $\struct{A}$ and $\struct{B}'$ of $\struct{B}$ of size at most $n$ with $\boldf{t}\in (A'\times B')^n$ and $\struct{A}'\times \struct{B}' \centernot{\models} \phi(\boldf{t})$.
	Since models of $\Phi$ are preserved under taking substructures, we have $\struct{A}' \models \Phi$ and $\struct{B}' \models \Phi$.
	Conversely, if there exist two models of $\Phi$ of size at most $n$ whose product is not a model of $\Phi$, then clearly $\Phi$ is not equivalent to a universal Horn sentence by Theorem~\ref{convexity}.

	The argument above shows that the following algorithm is sound and complete for the complement of the original problem.
	We first guess two models $\struct{A}$ and $\struct{B}$ of $\Phi$ of size at most $n$ such that $\struct{A}\times \struct{B}$ is not a model of $\Phi$.
	The latter can be verified in time polynomial in $n$ by guessing a tuple $\boldf{t}\in (A\times B)^n$ such that $\struct{A} \times \struct{B}\centernot{\models} \phi(\boldf{t})$.
	Verifying $\struct{A}\models \Phi$ and $\struct{B}\models \Phi$ iteratively would require a loop over up to $n^n$ many tuples, which would not yield an efficient procedure. Instead, we deal with the verification  using a coNP oracle that guesses any potential tuple witnessing that $\struct{B}\centernot{\models} \Phi $ or $\struct{A}\centernot{\models} \Phi$.
	
	Since this shows that the complement of the original problem is in $\Sigma^p_2$, 
	the original problem itself is in $\Pi^p_2$.

	The $\Pi^p_2$-hardness can be shown by a reduction from the complement of the propositional $\exists\forall$SAT problem.
	Consider an instance  
	\begin{align} 
		\exists X_1,\dots, X_{k}\forall X_{k+1},\dots, X_{\ell}\ldotp  \Psi(X_1,\dots, X_{\ell}) \label{eq:exists_forall_instance}
	\end{align}
	of propositional $\exists\forall$SAT.
	We first obtain the signature  $\tau = \{C_{1},\dots, C_{k},C,L,R\}$ consisting of unary symbols only.
	Let $\psi(x,x_{k+1},\dots,x_{\ell} )$ be the quantifier-free $\tau$-sentence obtained from  $\Psi(X_1,\dots, X_{\ell})$ by replacing each propositional variable $X_i$ with $C_i(x)$ if $i\in [k]$, and with $C(x_i)$ if $i\in [\ell]\setminus [k]$.
	Now we set  
	\begin{align*}
		\Phi  \coloneqq \forall x,y,x_{k+1},\dots,x_{\ell} \big( \neg C(x) \wedge C(y) \big) \Rightarrow \Big(\psi(x,x_{k+1},\dots,x_{\ell} )  \wedge \big(L(x) \vee R(x)\big)\Big) .
	\end{align*} 
	The idea is to show that: $\Phi$ is equivalent to $\forall x,y \big( C(y) \Rightarrow C(x)\big)$ if \eqref{eq:exists_forall_instance} is not satisfiable, and otherwise $\Phi$ is not  equivalent to any universal Horn sentence.
	
	``$\Rightarrow$'':  Suppose that \eqref{eq:exists_forall_instance} is satisfiable, i.e., there exists a map $f\colon \{X_1,\dots, X_{k}\} \rightarrow \{0,1\}$ such that every map $f' \colon\{X_1\dots, X_{\ell}\} \rightarrow \{0,1\}$ which extends $f$ is a satisfactory assignment for $\Psi(X_1,\dots, X_{\ell})$. 
	Let $\struct{A}_L$ be the $\tau$-structure over $\{a_1,a_2\}$ such that 
	\begin{enumerate}
		\item \label{it:first_part} for every $i\in [k]$ and $j\in [2]$,  $\struct{A}_L \models C_i(a_j)$ if and only if $f(X_i)=1$,
		\item \label{it:second_part} $\struct{A}_L \models \neg C(a_1)\wedge L(a_1)$, and $\struct{A}_L \models  C(a_2)$.
	\end{enumerate}
	We define $\struct{A}_R$ analogously by switching the roles of $L$ and $R$ in item~\ref{it:second_part} above.
	It follows directly from our assumption about $f$, item~\ref{it:first_part}, and item~\ref{it:second_part} that $\struct{A}_L \models \Phi$. 
	We also clearly get $\struct{A}_R \models \Phi$ since the construction of $\struct{A}_L$ and $\struct{A}_R$ is symmetrical w.r.t.\ $\Phi$.
	Now consider the structure $\struct{A}_L\times \struct{A}_R$.
	%
	%By item~\ref{it:first_part}, for every $\boldf{a} \in \{a_1,a_2\}^2$ and every $i\in [k]$, 
	% 
	%$\struct{A}_L\times \struct{A}_R\models  C_i(\boldf{a})$ if and only if $f(X_i)=1$.
	%
	%By our assumption about $f$, we have that  $\struct{A}_L\times \struct{A}_R\models  \psi((a_1,a_1),(a_1,a_1),\dots,(a_1,a_1))$.
	%
	We  have that $\struct{A}_L\times \struct{A}_R\models \neg C((a_1,a_1)) \wedge C((a_2,a_2))$.
	However, $\struct{A}_L\times \struct{A}_R\centernot{\models} L((a_1,a_1)) \vee R((a_1,a_1))$.
	Thus, $\Models{\Phi}$ is not preserved under products, which means that $\Phi$ is not equivalent to any universal Horn sentence by Theorem~\ref{convexity}.
	
	``$\Leftarrow$'': Suppose that $\Phi$ is not equivalent to any universal Horn sentence. By Theorem~\ref{convexity}, $\Models{\Phi}$ is not preserved in binary products of finite structures, i.e., there exist $\struct{A},\struct{B}\in \Models{\Phi}$ such that $\struct{A} \times \struct{B} \centernot{\models} \Phi$.
	
	First, suppose that either $\struct{A}\models C(a)$ for every $a\in A$ or $\struct{A}\models \neg C(a)$ for every $a\in A$, and either $\struct{B}\models C(b)$ for every $b\in B$ or $\struct{B}\models \neg C(b)$ for every $b\in B$.
	Then it also holds that either $\struct{A}\times \struct{B} \models C((a,b))$ for every $(a,b)\in A\times B$ or $\struct{A}\times \struct{B} \models \neg C((a,b))$ for every $(a,b)\in A\times B$.
	But then we clearly have  $\struct{A}\times \struct{B} \models \Phi$, a contradiction to our original assumption.

	Next suppose that, without loss of generality, there exist $a_1,a_2 \in A$ such that $\struct{A}\models \neg C(a_1)\wedge C(a_2)$.
	Let $f \colon \{X_1,\dots, X_{k}\} \rightarrow \{0,1\}$ be the map defined by $f(X_i) = 1$ if and only if $\struct{A}\models C_i(a_1)$.
	Let $f' \colon \{X_1,\dots, X_{\ell}\} \rightarrow \{0,1\}$ be an arbitrary extension of $f$.
	We want to show that $f'$ is a satisfactory assignment for $\Psi(X_1,\dots, X_{\ell})$. 
	For this, we consider the map $f''\colon \{x_{k+1},\dots, x_{\ell} \} \rightarrow \{a_1,a_2\}$ given by $f''(x_i) = a_2$ if and only if $f'(X_i) = 1$.
	Since $\struct{A} \models \Phi_1$, we have 
	$\struct{A} \models \psi(a_1,f(x_{k+1}),\dots,f(x_{\ell})).$
	By the definition of $\psi$ and $f''$, the map $f'$ is a satisfying assignment for $\Psi(X_1,\dots, X_{\ell})$. 
	We conclude that \eqref{eq:exists_forall_instance} is satisfiable.  
\end{proof} 

In the context of the results in \cite{PReport}, the $\Pi^p_2$-hardness in Proposition~\ref{prop:equivalence_horn} would also be interesting under the assumption that $\Models{\Phi}$ has the JEP. 
This can be achieved with a simple trick where we introduce a fresh binary symbol $E$ into $\tau$ and expand the premise of the implication in $\Phi$ by the conjunction $E(x,y) \wedge E(y,x_{k+1})\wedge\bigwedge_{i=k+1}^{\ell-1} E(x_i,x_{i+1})$.
As a consequence, $\Models{\Phi}$ is even closed under the formation of disjoint unions, which is a strong form of the JEP.
\begin{corollary}  
	Deciding whether a given universal sentence $\Phi$ such that $\Models{\Phi}$ has the JEP is equivalent to a universal Horn sentence is $\Pi^p_2$-hard, even when the signature is limited to binary relation symbols. 
\end{corollary}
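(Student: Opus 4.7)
The plan is to reduce from the $\Pi^p_2$-hard problem of Proposition~\ref{prop:equivalence_horn}, adapted so that the output sentence satisfies a strong form of the JEP. Given an instance $\Phi$ over the unary signature $\tau = \{C_1,\dots,C_k,C,L,R\}$ produced by that reduction, I would extend $\tau$ by a fresh binary symbol $E$ and define $\Phi'$ to be the sentence obtained from $\Phi$ by expanding the premise of its unique implication with the $E$-chain
\[
E(x,y) \wedge E(y,x_{k+1}) \wedge \bigwedge_{i=k+1}^{\ell-1} E(x_i, x_{i+1}).
\]
The new signature contains only relation symbols of arity at most two, so the restriction of the corollary is met.

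The first step is to check that $\Models{\Phi'}$ is closed under disjoint unions, which implies the JEP. In $\bA \sqcup \bB$ the interpretation of $E$ is the disjoint union of $E^{\bA}$ and $E^{\bB}$, so every $E$-edge is internal to one of the components. Consequently, any tuple $(x,y,x_{k+1},\dots,x_\ell)$ satisfying the full $E$-chain has all of its coordinates in a single component, say $A$, and then the conclusion follows from $\bA \models \Phi'$ because the unary relations are inherited as well.

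For correctness, the forward direction (satisfiable $\exists\forall$SAT $\Rightarrow$ $\Phi'$ not Horn) would reuse the two-element structures $\bA_L, \bA_R$ from the proof of Proposition~\ref{prop:equivalence_horn}, each equipped with $E$ interpreted as the full binary relation. The proofs that $\bA_L, \bA_R \models \Phi$ go through verbatim for $\Phi'$ since the $E$-chain part of the premise is trivially true. The direct product also carries the full $E$ relation, so the same witness $((a_1,a_1),(a_2,a_2),\ldots)$ still fires the premise and falsifies $L \vee R$, which by Theorem~\ref{convexity} shows that $\Phi'$ is not equivalent to any universal Horn sentence.

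The main obstacle is the backward direction: inferring $\exists\forall$SAT satisfiability from $\Phi'$ not being equivalent to a Horn sentence. Starting from a product counterexample $\bA \times \bB \not\models \Phi'$, the key observation is that the firing of the $E$-chain in the product forces an $E$-chain at the projected tuple of each factor, supplying exactly the $E$-edges that are missing from the original analysis. Picking $a_1, a_2$ from within the projected tuple with $\neg C(a_1) \wedge C(a_2)$ in $\bA$ (WLOG), one obtains $\psi(a_1,\bar a)$ from $\bA \models \Phi'$, and the existential assignment $f(X_i)\coloneqq 1 \Leftrightarrow \bA \models C_i(a_1)$ can then be shown to satisfy the $\exists\forall$SAT instance exactly as in Proposition~\ref{prop:equivalence_horn}. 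The delicate point is handling each extension $f'$: one must verify that the $E$-edges needed to instantiate the original argument are present among $\{a_1, a_2\}$, and taking a minimal product counterexample (whose size is bounded by the number of universally quantified variables of $\Phi'$) keeps this bookkeeping tractable.
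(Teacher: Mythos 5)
You follow exactly the trick the paper sketches (the identical fresh binary symbol $E$ and the identical chain added to the premise), and the parts you work out in detail are correct: the modified clause is connected, so $\Models{\Phi'}$ is closed under disjoint unions, and the forward direction goes through when $E$ is interpreted as the full binary relation on $\bA_L$, $\bA_R$ and hence on their product. The genuine gap is precisely the step you defer to ``bookkeeping'': a product counterexample $\bA\times\bB\not\models\Phi'$ certifies that the premise fires at \emph{one} tuple, and projecting yields \emph{one} $E$-chain $a_x \to a_y \to a_{k+1}\to\dots\to a_{\ell}$ in, say, $\bA$; via $\bA\models\Phi'$ this gives the truth of $\Psi$ under the \emph{single} universal assignment $X_i\mapsto 1$ iff $\bA\models C(a_i)$. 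To conclude satisfiability of the $\exists\forall$SAT instance you must check all $2^{\ell-k}$ assignments, which in Proposition~\ref{prop:equivalence_horn} is done by instantiating $x_{k+1},\dots,x_{\ell}$ with arbitrary combinations of $a_1$ and $a_2$; for $\Phi'$ each such instantiation additionally needs edges such as $E(a_1,a_1)$, $E(a_2,a_2)$, $E(a_2,a_1)$ which the one witnessed chain does not supply, and minimality of the counterexample cannot produce them.

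This is not merely an unverified step: the reduction as you (and the paper's remark) state it is incorrect. Take $k=1$, $\ell=2$, $\Psi=X_2$, so that $\exists X_1\forall X_2\ldotp X_2$ is unsatisfiable and $\Phi'$ is $\forall x,y,x_2\,\big(\neg C(x)\wedge C(y)\wedge E(x,y)\wedge E(y,x_2)\big)\Rightarrow\big(C(x_2)\wedge(L(x)\vee R(x))\big)$. Let $\bA$ have domain $\{a,b,c\}$ with $C^{\bA}=\{b,c\}$, $E^{\bA}=\{(a,b),(b,c)\}$, $L^{\bA}=\{a\}$, $R^{\bA}=\emptyset$, and let $\bB$ be the same structure with $L$ and $R$ swapped. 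Both satisfy $\Phi'$, since the premise fires only at $(a,b,c)$; but in $\bA\times\bB$ the premise fires at $\big((a,a),(b,b),(c,c)\big)$ while $L((a,a))\vee R((a,a))$ fails, so $\Phi'$ is not equivalent to a universal Horn sentence by Theorem~\ref{convexity} even though the instance is a yes-instance of the complement problem. The point is that without $E$ a mixed interpretation of $C$ lets the premise fire at \emph{every} triple, forcing $C(x_1)$ everywhere and thereby excluding such models; the $E$-chain disables exactly this mechanism. Some further idea is therefore required, e.g.\ a modification ensuring that whenever the premise can fire at all, it can fire at every $\{a_1,a_2\}$-instantiation of the universally quantified variables.
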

The trick above is based on a general observation about connected clauses that we later use in our undecidability proof, in the form of Proposition~\ref{connected}.

\begin{definition} 
	A Horn clause $\phi \Rightarrow \psi$  is called \emph{connected} if the graph with vertex set $\{x_1,\dots,x_n\}$ where $x_i$ and $x_j$ form an edge if they appear jointly in a conjunct of $\phi$, forms a connected graph on all variables occurring in $\phi \Rightarrow \psi$ in the usual graph-theoretic sense.  
\end{definition}

% In the following, we fix a finite relational signature $\tau$. 
%
% Let $\struct{A},\struct{B}_1,\struct{B}_2$ be relational $\tau$-structures with $\struct{A}\hookrightarrow \struct{B}_i$ for $i\in [2]$.
%
% W.l.o.g.\ we may assume that $B_1 \cap B_2 = A$, otherwise we rename elements accordingly.
%  
% Then the \emph{free amalgam} of $(\struct{A},\struct{B}_1,\struct{B}_2)$ is defined as the union of $\struct{A}$, $\struct{B}_1$, and $\struct{B}_2$.
%
\begin{proposition} \label{connected} 
	Let $\Phi$ be a universal Horn sentence such that each conjunct in the quantifier-free part of $\Phi$ is connected. Then $\Models{\Phi}$ is closed under the formation of disjoint unions and therefore has the JEP.
\end{proposition}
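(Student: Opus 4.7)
The plan is to prove the stronger statement that $\Models{\Phi}$ is closed under disjoint unions; the JEP then follows immediately because the canonical injections into the disjoint union are embeddings. Given finite models $\bB_1, \bB_2$ of $\Phi$, I would form the structure $\bC$ with domain $C \coloneqq B_1 \sqcup B_2$ and $R^{\bC} \coloneqq R^{\bB_1} \cup R^{\bB_2}$ for every $R \in \tau$, and reduce the whole task to showing $\bC \models \Phi$.

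To this end, I would fix an arbitrary conjunct $\alpha \colon \phi_1 \wedge \cdots \wedge \phi_n \Rightarrow \phi_0$ of the quantifier-free part of $\Phi$, together with an assignment $f$ of its variables into $C$ satisfying the premise in $\bC$. The key local observation is that, since $B_1 \cap B_2 = \emptyset$ and $R^{\bC}$ is the disjoint union of $R^{\bB_1}$ and $R^{\bB_2}$ for every $R \in \tau$, any atomic conjunct $\phi_j = R(x_{j_1},\dots,x_{j_k})$ witnessed by $f$ must have its tuple $(f(x_{j_1}),\dots,f(x_{j_k}))$ lying entirely in either $R^{\bB_1}$ or $R^{\bB_2}$, so all entries belong to a common side $B_i$. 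Consequently, any two variables co-occurring in a conjunct of the premise are sent by $f$ into the same side.

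Invoking the connectedness of $\alpha$, any two variables of the clause are linked by a chain of such co-occurrences, so all of them land in a single $B_i$; say $B_1$. Reading $f$ as an assignment into $\bB_1$, it satisfies the premise there (using $R^{\bC} \cap B_1^k = R^{\bB_1}$), and since $\bB_1 \models \Phi$ entails $\bB_1 \models \alpha$, the conclusion $\phi_0$ must hold in $\bB_1$ under $f$---the case $\phi_0 = \bot$ is excluded because it would already contradict $\bB_1 \models \alpha$. Using $R^{\bB_1} \subseteq R^{\bC}$, the conclusion transports back to $\bC$, giving $\bC \models \alpha$, and therefore $\bC \models \Phi$.

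The only genuinely subtle point---and the unique place where connectedness is invoked---is the propagation step above: without connectedness, one could have a clause whose premise $f$ witnesses partly in $\bB_1$ and partly in $\bB_2$ while its conclusion fails in both sides, producing a counterexample in $\bC$. Connectedness precisely forbids this scenario, and everything else reduces to a direct application of $\bB_i \models \Phi$.
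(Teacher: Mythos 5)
Your proof is correct. The paper actually states Proposition~\ref{connected} without giving a proof (treating it as folklore and pointing only to Theorem~4.4 of \cite{compton1983some} for the stronger converse characterization), and your argument is precisely the standard one it implicitly relies on: each premise atom forces its variables into a single component of the disjoint union, connectedness propagates this to all variables of the clause (including those of the conclusion, which the paper's definition of connectedness deliberately covers), and then $\bB_i \models \Phi$ finishes the job, with the $\bot$-conclusion case correctly ruled out.
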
   
\begin{example} The class of all finite strict partial orders, defined by  
	\[ 	\forall x,y,z \big( x<y \wedge y<z \Rightarrow x<z  \big) \wedge \big( x<x \Rightarrow \bot   \big), \]
	is the age of the set of all finite subsets of $\mathbb{N}$ partially ordered by  set inclusion.
\end{example}
In fact, universal sentences preserved under disjoint unions can be fully characterized in terms of being equivalent to a universally quantified conjunction of connected clauses, see, e.g., Theorem~4.4 in \cite{compton1983some}.
One should not expect any similar normal form for universal sentences whose finite models have the JEP since this property is undecidable.
However, there is a useful reformulation of the JEP  for classes of finite structures defined by universal Horn sentences that greatly simplifies the presentation of our undecidability proof.
\begin{definition} 
	\label{def:domination} 
	Let 	$\Phi$ be a universal Horn sentence over the relational signature $\tau$, and let $\phi(\boldf{x})$ and $\psi(\boldf{x},\boldf{y})$ be conjunctions of atomic $\tau$-formulas.
	We write  %say that \emph{\blue{$\phi(\boldf{x})$  dominates $ \psi(\boldf{x},\boldf{y})$ modulo $\Phi$}} and write 
	 $\psi(\boldf{x},\boldf{y}) \leq_{\Phi} \phi(\boldf{x})$  if for every $\chi $ that is $\bot$ or an atomic $\tau$-formula with free variables among $\boldf{x}$
	$$
	\Phi \models \forall  \boldf{x},\boldf{y}\big(  \psi(\boldf{x},\boldf{y})  \Rightarrow \chi(\boldf{x}) \big) \quad \text{ implies }\quad  \Phi\models  \forall  \boldf{x} \big(\phi(\boldf{x})\Rightarrow \chi(\boldf{x}) \big).  
	$$
\end{definition}
%

%Note that \blue{domination} can be tested using SLD-deduction (Theorem~\ref{SLD-deduction}). 
%
%\begin{observation} Let $\Phi_1,\Phi_2$ be universal Horn sentences and $\psi$ a Horn clause, all over a common finite relational signature, such that $\Phi_1\wedge \Phi_2 \models \psi$. If there exists a symbol $R$ with the property that
%	\begin{itemize}
%		\item $R$ appears in the premise of every Horn clause in $\Phi_2$,
%		\item $R$ does not appear in $\psi$ at all, and
%		\item $R$ does not appear in the conclusion of any Horn clause in $\Phi_1$,
%	\end{itemize}
%then $\Phi_1 \models \psi$.
%\end{observation}
%%
%\begin{proof}
%	
%\end{proof} 
%
In our proofs later it will be useful to take a proof-theoretic perspective on the JEP.

\begin{lemma} \label{charact_jep_horn} Let $\Phi$  be a universal Horn 
	sentence over the relational signature $\tau$. Then the following are equivalent:
	\begin{enumerate} 
		\item \label{eq:lemma1_item1}  $\Models{\Phi}$ has the joint embedding property.
		\item  \label{eq:lemma1_item2} Suppose that   
		$\phi_1(\boldf{x}_1)$
		and $\phi_2(\boldf{x}_2)$ are  
		conjunctions of atomic formulas with disjoint sets of variables and such that $\phi_i(\boldf{x}_i) \wedge \Phi$ is satisfiable for both $i\in [2]$.
		Then 
		$
		\phi_1(\boldf{x}_1) \wedge \phi_2(\boldf{x}_2)  \leq_{\Phi} \phi_1(\boldf{x}_1).
		$
	\end{enumerate}
\end{lemma}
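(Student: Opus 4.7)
The plan is to handle each direction separately, using JEP directly for $(1) \Rightarrow (2)$ and a closure construction on a disjoint-union universe for $(2) \Rightarrow (1)$. For the easier direction $(1) \Rightarrow (2)$, fix conjunctions $\phi_1, \phi_2$ as in the statement and choose finite witnesses $\bA_i \models \Phi$ with $\bA_i \models \phi_i(\boldf{a}_i)$; these can be taken finite by passing to the substructure generated by $\boldf{a}_i$, which still models $\Phi$ by universality. Assuming $\Phi \models (\phi_1 \wedge \phi_2) \Rightarrow \chi(\boldf{x}_1)$, I would derive $\Phi \models \phi_1 \Rightarrow \chi(\boldf{x}_1)$ by taking an arbitrary finite $\bA \models \Phi$ with $\bA \models \phi_1(\boldf{a})$, applying JEP to $\bA$ and $\bA_2$ to obtain $\bC \in \Models{\Phi}$ with embeddings $f_1, f_2$, and noting that $\bC \models \phi_1(f_1(\boldf{a})) \wedge \phi_2(f_2(\boldf{a}_2))$ thanks to the disjointness of $\boldf{x}_1, \boldf{x}_2$. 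Then $\bC \models \chi(f_1(\boldf{a}))$, which reflects back along $f_1$ to $\bA \models \chi(\boldf{a})$ when $\chi$ is atomic; the case $\chi = \bot$ is immediately contradicted by the existence of $\bC$.

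For $(2) \Rightarrow (1)$, given $\bB_1, \bB_2 \in \Models{\Phi}$, let $\phi_i(\boldf{x}_i)$ be the atomic diagram of $\bB_i$ in fresh variables $x^i_b$ ($b \in B_i$), so $\phi_i \wedge \Phi$ is witnessed by $\bB_i$ under $x^i_b \mapsto b$. I define $\bC$ as the $\tau$-structure on the disjoint union $B_1 \sqcup B_2$ whose relations are the smallest sets containing $R^{\bB_1} \cup R^{\bB_2}$ (with each $\bB_i$ interpreted on its copy of the universe and no ``cross'' tuples) and closed under the Horn clauses of $\Phi$. The closure never fires a clause with conclusion $\bot$: otherwise, by the correspondence between closure-derivations and entailment established below, $\Phi \models \phi_1 \wedge \phi_2 \Rightarrow \bot$, and (2) would yield $\Phi \models \phi_1 \Rightarrow \bot$, contradicting $\bB_1 \models \Phi \wedge \phi_1$. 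Hence $\bC$ is a finite model of $\Phi$, and the inclusions $\iota_i: \bB_i \to \bC$ are homomorphisms by construction.

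It remains to show that each $\iota_i$ reflects relations. Suppose $\bC \models R(\iota_1(\boldf{a}))$ with $\boldf{a} \in B_1^k$. An induction along the closure derivation of this atom shows that for every $\bD \models \Phi$ and every valuation $\nu$ with $\bD \models (\phi_1 \wedge \phi_2)(\nu)$, the map $\iota_i(b) \mapsto \nu(x^i_b)$ extends to a homomorphism $\bC \to \bD$; applied to $R(\iota_1(\boldf{a}))$, this yields $\Phi \models (\phi_1 \wedge \phi_2) \Rightarrow R(\boldf{x}^1_{\boldf{a}})$, where $\boldf{x}^1_{\boldf{a}}$ is the tuple $(x^1_{\boldf{a}[1]}, \ldots, x^1_{\boldf{a}[k]})$. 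Hypothesis (2) then gives $\Phi \models \phi_1 \Rightarrow R(\boldf{x}^1_{\boldf{a}})$, and specializing to $\bB_1$ under $x^1_b \mapsto b$ produces $\bB_1 \models R(\boldf{a})$, as desired. The main obstacle is precisely this bridge from the fixed-universe closure construction of $\bC$ to the universally quantified entailment appearing in (2); the homomorphism induction handles it cleanly, and Theorem~\ref{SLD-deduction} applied to $\Phi$ together with the Horn facts $\phi_1, \phi_2$ provides an equivalent proof-theoretic substitute.
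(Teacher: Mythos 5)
Your proof is correct and follows essentially the same route as the paper: for \eqref{eq:lemma1_item1}$\Rightarrow$\eqref{eq:lemma1_item2} you joint-embed witnesses of $\Phi\wedge\phi_1$ and $\Phi\wedge\phi_2$ and reflect the atomic conclusion back along the embedding, and for \eqref{eq:lemma1_item2}$\Rightarrow$\eqref{eq:lemma1_item1} you build the canonical model on the disjoint union of $B_1$ and $B_2$ and use \eqref{eq:lemma1_item2} to rule out a derived $\bot$ and any cross-contamination of relations. The only cosmetic differences are that the paper works with the canonical models of $\Phi\wedge\phi_i$ (the structures whose atoms are exactly the entailed ones) so that a single application of the JEP suffices, and describes your closure denotationally as the conjunction $\Psi$ of all entailed atoms rather than as a fixed point.
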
  
\begin{proof}    
	``\ref{eq:lemma1_item1}$\Rightarrow$\ref{eq:lemma1_item2}'':  Let  $\phi_1(\boldf{x}_1)$ and $\phi_2(\boldf{x}_2)$ be as in the first part of item~\ref{eq:lemma1_item2}. 
	Let  $\Psi_1(\boldf{x}_1)$ and $\Psi_2(\boldf{x}_2)$ be the conjunctions of all atomic formulas implied by  $\Phi\wedge  \phi_1(\boldf{x}_1)$ and $\Phi \wedge  \phi_2(\boldf{x}_2)$, respectively. 
	By our assumption,  $\Phi \wedge \Psi_1(\boldf{x}_1)$ and $\Phi \wedge \Psi_2(\boldf{x}_2)$ are both  satisfiable.
	Define   $\struct{B}_1$ and $\struct{B}_2$ as the structures  whose domains consist of the 
	variables  $\{ \boldf{x}_1\of{1},\dots\}$, and $\{ \boldf{x}_2\of{1},\dots\}$, respectively, and 
	where $\boldf{z}$ is a tuple of a relation for $R\in \tau$ if the conjunct $R(\boldf{z})$ 
	is contained in $ \Psi_1$ or $\Psi_2$, respectively. 
	Note that, by construction,  $\struct{B}_1$ and $ \struct{B}_2$ satisfy every Horn clause in $\Phi$. Since $\Phi$ is universal Horn, this implies that $ \struct{B}_1, \struct{B}_2 \in \Models{\Phi}$.
	Since $\Models{\Phi}$ has the joint embedding property, there exists $\struct{C} \in \Models{\Phi}$ together with embeddings $f_i\colon \struct{B}_i \hookrightarrow \struct{C}$ for $i\in \{1,2\}$.
	By the construction of $\struct{B}_1$ and $\struct{B}_2$, it follows that  
	$\Phi \centernot{\models} \forall \boldf{x}_1,\boldf{x}_2\big(   \phi_1(\boldf{x}_1) \wedge \phi_2(\boldf{x}_2)  \Rightarrow \bot \big).$
	Let $\chi(\boldf{x}_1)$ be an atomic $\tau$-formula   such that 
	$
	\Phi \models \forall \boldf{x}_1,\boldf{x}_2\big(  \phi_1(\boldf{x}_1) \wedge \phi_2(\boldf{x}_2)  \Rightarrow \chi(\boldf{x}_1)  \big).
	$
	By the construction of  $\struct{B}_1$  and $\struct{B}_2$, and because  $f_1$ and $f_2$ are homomorphisms, there exist a tuple $\boldf{z}$ over $B_1$ such that 
	$\struct{C} \models \chi (f_1(\boldf{z}))$.
	Since $f_1$ is an embedding, we must also have $\struct{B}_1 \models \chi (\boldf{z})$.
	Thus, by the construction of  $\struct{B}_1$  and $\struct{B}_2$, it follows 
	that 
	$\Phi \models \forall \boldf{x}_1 \big( \phi_1(\boldf{x}_1) \Rightarrow \chi(\boldf{x}_1)  \big).$

	``\ref{eq:lemma1_item2}$\Rightarrow$\ref{eq:lemma1_item1}'':  Let $\struct{B}_1,\struct{B}_2 \in \Models{\Phi}$ be arbitrary.
	We construct a structure $\struct{C} \in \Models{\Phi}$ with $f_{i}\colon \struct{B}_i \hookrightarrow \struct{C}$ as follows.
	Without loss of generality we may assume that  $B_1\cap B_2 = \emptyset$.
	%
	%Let $\boldf{x}$ be a tuple of variables representing the elements of $B_1\cap B_2$ in some order, and let
	%
	Let $ \phi_1(\boldf{x}_1)$ and $ \phi_2(\boldf{x}_2)$ be the conjunctions of all atomic $\tau$-formulas which hold in $\struct{B}_1$ and $\struct{B}_2$, respectively.  
	By construction,  $ \phi_i(\boldf{x}_i) \wedge \Phi$ is satisfiable for both $i\in [2]$.
	Let $\Psi(\boldf{x}_1,\boldf{x}_2)$ be the conjunction of all atomic formulas implied by $\Phi\wedge \phi_1(\boldf{x}_1) \wedge \phi_2(\boldf{x}_2)$.
	We claim that $\Phi \wedge \Psi$ is satisfiable:
	otherwise, $\Phi \models \forall \boldf{x}_1,\boldf{x}_2 ( \phi_1(\boldf{x}_1) \wedge \phi_2(\boldf{x}_2)\Rightarrow \bot)$,
	and then item~\ref{eq:lemma1_item2} implies that  $\Phi \models \forall \boldf{x}_1 ( \phi_1(\boldf{x}_1) \Rightarrow \bot)$, which is impossible since  $\struct{B}_1 \models \Phi$.
	Define $\bC$ as the structure with domain $\{\boldf{x}_1\of{1},\dots, \boldf{x}_2\of{1},\dots\}$ and such that $R^{\struct{C}}$ contains a tuple $\boldf{z}$  if and only if   $\Psi$ contains the conjunct $R(\boldf{z})$. 
	For $i\in [2]$, let $f_i$ be the identity map.  We claim that $f_i$ is an embedding  from $\struct{B}_i$ to $\struct{C}$.
	It is clear from the construction of $\bC$ that $f_i$ is a homomorphism. 
	Suppose for contradiction that there exists $ R\in \tau$ and a tuple $\boldf{z}$ over $ B_i$ such that 
	$\boldf{z} \notin R^{\struct{B}}_i$ while   $f_i(\boldf{z}) \in R^{\struct{C}}$.
	For the sake of notation, we assume that $i = 1$; the case that $i=2$ can be shown analogously. 
	Note that the construction of $\bC$ implies that  $\Phi \models \forall \boldf{x}_1,\boldf{x}_2  \big(  \phi_1(\boldf{x}_1) \wedge  \phi_2(\boldf{x}_2)  \Rightarrow R(\boldf{z})\big)$.
	Then item~\ref{eq:lemma1_item2} implies that 
	$\Phi \models \forall \boldf{x}_1 \big( \phi_1(\boldf{x}_1) \Rightarrow R(\boldf{z})\big)$, a contradiction to $\struct{B}_1\in \Models{\Phi}$.
	Thus,  $f_i$ is an embedding from $\struct{B}_i$ to $\struct{C}$.
	This concludes the proof of the joint embedding property.   
\end{proof} 
We mention in passing that \eqref{eq:lemma1_item1} and \eqref{eq:lemma1_item2}~are equivalent even if equality is permitted in $\Phi$.

\section{Undecidability of the JEP}
In this section we prove that the problem of deciding whether the class of all finite models of a given universal Horn sentence $\Phi$ has the joint embedding property is undecidable. Our proof is based on a reduction from the problem of deciding the universality of a given context-free grammar. 
As usual, the \emph{Kleene plus} of $\Sigma$, denoted by $\Sigma^{+}$, is the set of all finite words over $\Sigma$ of length $\geq 1$.
A context-free grammar (CFG) is a $4$-tuple $G=(N,\Sigma,P,S)$ where 
\begin{itemize}
	\item $N$ is a finite set of \emph{non-terminal symbols},
	\item $\Sigma$ is a finite set of \emph{terminal symbols},
	\item $P$ is a finite set of \emph{production rules} of the form $A \rightarrow w$ where $A\in N$ and $w \in (N \cup \Sigma)^{+}$,
	\item $S \in N$ is the \emph{start symbol}.
\end{itemize}  
For $u,v\in (N\cup \Sigma)^{+}$ we write $u \rightarrow_{G} v $ if there are $x,y\in(N \cup \Sigma)^{+} $ and $(A \rightarrow w)\in P$ such that $u=xAy$ and $v=xwy$.
The transitive closure of $\rightarrow_{G}$ is denoted by $\rightarrow_{G}^{\ast}$.
The \emph{language of $G$} is $L(G) \coloneqq \{ w\in \Sigma^{+} \mid S \rightarrow_{G}^{\ast} w \}$.
Note that with this definition the \emph{empty word}, i.e., the word $\epsilon$ of length $0$, can never be an element of $L(G)$; some authors use a modified definition that also allows rules that derive $\epsilon$, but for our purposes the difference is not essential. 
\begin{example} Let $G\coloneqq (\{S\}, \{a,b\},\{S \rightarrow aSb, S \rightarrow ab\}, S)$. Then it follows by a simple induction that $L(G) = \{a^n b^n \mid n\geq 1 \}$ because every accepting derivation path in $\rightarrow_G$  is of the form
	$
	S \rightarrow_G  \cdots \rightarrow_Ga^{n-1} S b^{n-1} \rightarrow_G a^nb^n.  
	$ 
\end{example}
The idea of the reduction is to compute from a given context-free grammar $G$ a universal Horn sentence which consists of two parts, $\Phi_1$ and $\Phi_2$:  the sentence $\Phi_2$ only depends on $\Sigma$ and 
entails many Horn clauses
witnessing failure of the JEP via Lemma~\ref{charact_jep_horn}; the sentence $\Phi_1$ can be computed efficiently from $G$ and is such that $\Models{\Phi_1}$ is closed under the formation of disjoint unions and 
prevents all the failures of the JEP of $\Models{\Phi_1 \wedge \Phi_2}$ if and only if 
$G$ is universal, i.e., $L(G) = \Sigma^+$.

\begin{theorem} \label{jep_horn_undecidability}  For a given universal Horn sentence $\Phi$ 
	the question whether $\Models{\Phi}$ has the JEP is undecidable even if the signature is limited to at most  binary relation symbols.
\end{theorem}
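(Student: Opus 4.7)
The plan is to reduce from the universality problem for context-free grammars, which is classically undecidable: given a CFG $G = (N, \Sigma, P, S)$, decide whether $L(G) = \Sigma^+$. From $G$, I construct a universal Horn sentence $\Phi_G \coloneqq \Phi_1 \wedge \Phi_2$ over a finite binary signature so that $\Models{\Phi_G}$ has the JEP if and only if $L(G) = \Sigma^+$. The signature contains a binary relation symbol $R_X$ for each $X \in N \cup \Sigma$, together with a small number of auxiliary binary symbols acting as endpoint markers and a trigger.

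The component $\Phi_1$ encodes the grammar: for each production $A \to X_1 \cdots X_k$, include the Horn clause $R_{X_1}(x_0, x_1) \wedge \cdots \wedge R_{X_k}(x_{k-1}, x_k) \Rightarrow R_A(x_0, x_k)$. Each such clause is connected, so by Proposition~\ref{connected}, $\Models{\Phi_1}$ is closed under disjoint unions and hence has the JEP on its own. The component $\Phi_2$ depends only on $\Sigma$ and consists of a fixed finite set of partly disconnected Horn clauses. Its role is to produce, for every word $w = a_1 \cdots a_n \in \Sigma^+$, a candidate violation of the JEP of the form described by Lemma~\ref{charact_jep_horn}: one arranges atomic conjunctions $\phi_1^w(x_0, \dots, x_n)$ encoding a path labelled $w$ and a fixed $\phi_2(\boldf{y})$ on disjoint variables, together with an atom $\chi^w \coloneqq R_S(x_0, x_n)$, such that $\Phi_G \models \phi_1^w \wedge \phi_2 \Rightarrow \chi^w$ always holds through firings of the disconnected clauses of $\Phi_2$ combined with $\Phi_1$'s connected closure rules. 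By soundness and completeness of SLD-deduction (Theorem~\ref{SLD-deduction}), a derivation of $\chi^w$ from $\phi_1^w$ alone exists exactly when there is a CFG derivation $S \rightarrow_{G}^{\ast} w$, i.e., exactly when $w \in L(G)$.

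The easy direction follows immediately: if some $w \in \Sigma^+ \setminus L(G)$ exists, then $\phi_1^w, \phi_2, \chi^w$ witness a violation of item~\ref{eq:lemma1_item2} of Lemma~\ref{charact_jep_horn}, so $\Models{\Phi_G}$ lacks the JEP. The converse, which I expect to be the main obstacle, is to show that when $L(G) = \Sigma^+$, the domination condition of Lemma~\ref{charact_jep_horn} holds for every satisfiable pair $\phi_1, \phi_2$ with disjoint variables and every atomic $\chi(\boldf{x}_1)$. Since $\Phi_1$ is connected, SLD-derivations that only use $\Phi_1$ contribute no cross-side information, by the argument behind Proposition~\ref{connected}. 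The only dangerous steps are firings of the disconnected clauses of $\Phi_2$; these have to be designed so that their premises, once satisfied in $\phi_1 \wedge \phi_2$, force $\phi_1$ to contain an $R_a$-labelled path whose label, lying in $L(G) = \Sigma^+$, already yields $\chi$ from $\phi_1$ alone through the grammar rules of $\Phi_1$. The delicate engineering task is to choose the auxiliary signature and the premises of $\Phi_2$ so that they cannot be \emph{faked} in $\phi_1$ by directly asserting auxiliary atoms without a corresponding path---either by making such $\phi_1$ unsatisfiable with $\Phi_G$, or by ensuring that $\Phi_1$ already derives $\chi$. Since the construction of $\Phi_G$ from $G$ is effective, a decision procedure for the JEP of universal Horn sentences over binary signatures would decide CFG universality, contradicting its undecidability.
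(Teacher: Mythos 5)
Your high-level strategy coincides with the paper's: reduce from CFG universality, encode the grammar by connected Horn clauses $\Phi_1$ (so that $\Models{\Phi_1}$ has the JEP by Proposition~\ref{connected}), add a second component $\Phi_2$ with a disconnected ``trigger'' that manufactures a cross-side entailment for every word $w\in\Sigma^+$, and apply Lemma~\ref{charact_jep_horn}. However, the proposal has a genuine gap: you never construct $\Phi_2$, and you explicitly defer the entire hard direction (``when $L(G)=\Sigma^+$ the domination condition holds for every satisfiable pair'') as a ``delicate engineering task.'' That task is not a routine verification --- it is the bulk of the argument. One must take an arbitrary minimal counterexample $\phi_1\wedge\phi_2\Rightarrow\chi$ to Lemma~\ref{charact_jep_horn}\eqref{eq:lemma1_item2} and show it has exactly the intended shape, ruling out counterexamples where $\phi_1$ directly asserts auxiliary atoms, where variables are identified, where $\chi$ is some unintended atom, or where the SLD-derivation interleaves grammar clauses with trigger clauses. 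The paper does this via a minimality argument and two structural observations about which symbols can appear in conclusions of SLD-derivations; nothing in your proposal substitutes for that analysis.

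Moreover, your specific design choice --- making the cross-derivable conclusion the atom $\chi^w=R_S(x_0,x_n)$ rather than $\bot$ --- actively works against the argument you would need. If $\Phi_2$ is to entail $\phi_1^w\wedge\phi_2\Rightarrow R_S(x_0,x_n)$ for words of arbitrary length, some clause of $\Phi_2$ must have an $R_S$-atom (or something feeding into one) as its conclusion. Then an SLD-derivation starting from a grammar clause of $\Phi_1$ whose premise contains an $R_S$-atom can resolve that atom against a $\Phi_2$-clause, so derivations from $\Phi_1$ and from $\Phi_2$ no longer stay separated, and the classification of minimal counterexamples becomes much harder. The paper avoids this precisely by adding the clause $I(x_1)\wedge T(x_2)\wedge R_S(x_1,x_2)\Rightarrow\bot$ to $\Phi_1$ and having $\Phi_2$ derive $\bot$ as well: conclusions of $\Phi_1$-clauses are $R_A$-atoms or $\bot$, conclusions of $\Phi_2$-clauses are $Q$-atoms or $\bot$, and since $\bot$ can never be resolved against, every SLD-derivation lives entirely inside $\Phi_1$ or entirely inside $\Phi_2$. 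You should either adopt the $\bot$-based design or supply a genuinely new argument controlling mixed derivations; as written, the reduction is not established.
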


\begin{proof} 
	The universality problem for context-free grammars is known to be undecidable~\cite{Harrison1978} (Lemma~8.4.2, page 259).
	%   
	%	In this proof, we only consider  context-free  grammars  without production rules of the form   $A \rightarrow w$ such that $w$ contains the start symbol $S$.
	%
	%	This is without loss of generality as every context-free grammar can be transformed into such form in polynomial time~\cite{Harrison1978}.
	%
	Here, we assume that $(A \to A) \in P$ for every  $A\in N $. Note that this assumption does not influence $L(G)$ at all.
	
	\paragraph{Encoding context-free grammars into ages of relational structures.}	
	Let $\tau_1$ be the signature that contains the unary symbols $I$ and $T$,  and the binary relation symbol $R_a$ for every element $a\in N\cup \Sigma$. 
	Let $\Phi_{1}$ be the universal Horn sentence that contains, for every $(A \rightarrow a_1\dots a_n) \in P$, the Horn clause  
	\begin{align}  
		\bigwedge_{i\in [n]} R_{a_i}(x_i,x_{i+1})  & \Rightarrow  R_A(x_1,x_{n+1}), \label{eq:trans-inductive}
	\end{align}
	and additionally the Horn clause  
	\begin{align}  
		I(x_1) \wedge T(x_{2})  \wedge R_{S}(x_1,x_{2})  & \Rightarrow \bot .  \label{eq:trans-s-inductive}
	\end{align} 
	Note that each conjunct  of $\Phi_{1}$ is connected, which means that $\Models{\Phi_{1}}$ has the JEP by Proposition~\ref{connected}. 
	The following correspondence can be shown via a straightforward induction. 
	\begin{myclaim} \label{correspondence} For every $ w=a_1\dots a_n \in (N\cup \Sigma)^{+}$,  we have
		\begin{align}
			A \rightarrow^{\ast}_{G} w \quad \mbox{ if and only if } \quad  \Phi_{1} \models \forall x_{1},\dots ,x_{n+1}    \nonumber \\    \Big(    \bigwedge_{i\in[n]} R_{a_i}(x_i,x_{i+1})   \Rightarrow R_A(x_1,x_{n+1}) \Big).  \label{eq:goal_nobot} \end{align}
		%	For $A=S$, we have 
		%	\begin{align}
		%	S \rightarrow^{\ast}_{G} w \quad \mbox{ if and only if } \quad  \Phi_{1} \models \forall x_{1},\dots ,x_{n+1}    \nonumber \\    \Big(  I(x_1) \wedge T(x_{n+1})  \wedge \bigwedge_{i\in[n]} R_{a_i}(x_i,x_{i+1})   \Rightarrow \bot \Big).  \label{eq:goal} \end{align}
	\end{myclaim} 
	\begin{proof} ``$\Rightarrow$'': Suppose that $A \rightarrow^{\ast}_{G}a_1\dots a_n$ for $A \in N$. 
		Then there is a path in $\rightarrow_G$ from $A$ to $a_1\dots a_n$ of length $\lambda \geq 1$.
		We prove the statements by induction on $\lambda$.
		
		In the \emph{induction base} $\lambda=1$ we have  $(A \rightarrow a)\in P$ in
		which case    	\eqref{eq:goal_nobot} is  a conjunct of $\Phi_1$.
		
		In the \emph{induction step} $\lambda \longrightarrow \lambda+1$, we assume that the claim holds for all paths of length $\leq \lambda$, and that there exists a path of length $\lambda+1$ from $A$ to $a_1\dots a_n$, i.e.,
		there exists $(A \to a_1\dots a_{k-1}Ba_{\ell+1}\dots a_n) \in P$  and a path of length $\lambda$ from $B$ to $a_k \dots a_{\ell}$.
		By the induction hypothesis,  after renaming of variables we  have that 
		\begin{align} 
			\Phi_{1} \models \forall x_{k},\dots ,x_{\ell+1} & \Big(  \bigwedge_{i\in [\ell]\setminus [k-1]}  R_{a_{i}}(x_{i},x_{i+1})  
			\Rightarrow R_B(x_k,x_{\ell+1}) \Big) . \label{eq:deriv1}  
		\end{align}
		By the construction of $\Phi_{1}$, 
		%	if $A \neq S$, 
		after renaming of variables we also have that 
		\begin{align} 
			\Phi_{1} \models \forall x_{1},\dots,x_{k}, x_{\ell+1},\dots, x_{n+1}  \Big(  \bigwedge_{i\in [k-1]} R_{a_i}(x_i,x_{i+1})   \nonumber  \\ \wedge\  R_{B}(x_{k},x_{\ell+1})  \wedge \bigwedge_{i\in [n]\setminus [\ell]} R_{a_i}(x_i,x_{i+1})  
			\Rightarrow R_A(x_1,x_{n+1}) \Big) . \label{eq:deriv2}  
		\end{align}
		We can now apply an SLD derivation step to~\eqref{eq:deriv2} with~\eqref{eq:deriv1} to obtain 
		\eqref{eq:goal_nobot}.
		% 
		%The argument for $A=S$ is analogous. 
		
		``$\Leftarrow$'': Suppose that 	$\Phi_{1} \models \eqref{eq:goal_nobot}$. 
		By Theorem~\ref{SLD-deduction}, there is  an SLD-deduction of \eqref{eq:goal_nobot} from $\Phi_1$.
		If \eqref{eq:goal_nobot} is a tautology, then $n= 1$ and $a_1=A$ in which case the statement is true because $(A \to A) \in P$ for every $A\in N$. 
		Otherwise,  \eqref{eq:goal_nobot} is a weakening of a Horn clause $\psi$ that has an SLD-derivation from $\Phi_1$ modulo renaming variables.
	    Note that the removal of any atom from the premise of \eqref{eq:goal_nobot} would make it disconnected. The same also applies to $\psi$ since its atomic subformulas are among the atomic subformulas of \eqref{eq:goal_nobot}.
			Since every Horn clause in $\Phi_1$ is connected, and SLD-derivations modulo renaming variables clearly preserve connectedness, it cannot be the case that \eqref{eq:goal_nobot} was obtained from $\psi$ by adding an atom to the premise.
			Next, note that $\Phi_1 \wedge \bigwedge_{i\in [n]} R_{a_i}(x_i,x_{i+1})$ is satisfiable for every $a_1\dots a_n \in (N\cup \Sigma)^{+}$.
			Thus, it also cannot be the case that \eqref{eq:goal_nobot} was obtained from $\psi$ by adding an atom to the conclusion. 
		    Hence, we may assume that  \eqref{eq:goal_nobot} and $\psi$ are equal.
		   We prove the claim by induction on the length $\lambda$ of a shortest possible SLD-derivation for $\psi$. 
		
		In the \emph{base case} $\lambda=0$,   
		$\psi$ 
		must be a conjunct of $\Phi_1$.
		%
		%Since each $a_i$ is a terminal symbol, $\psi$ must be of the form \eqref{eq:terminal}.
		%
		By the construction of $\Phi_{1}$, we get that $(A \to a_1\dots a_n) \in P$ and thus $A \rightarrow^{*}_{G}a_1\dots a_n$.  
		
		In the \emph{induction step} $\lambda \longrightarrow \lambda+1$, we assume that the claim holds if $\psi$ has an SLD-derivation of length $\leq \lambda$.
		Suppose that $\psi$ requires an SLD-derivation of length $\lambda+1$. 
		By the construction of  $\Phi_{1}$, there must exist $(B,a_k\dots a_{\ell} ) \in P$ such that $\Phi_{1}$ contains a conjunct of the form
		\eqref{eq:trans-inductive}	that is used in the last step in a shortest possible SLD-derivation of $\psi$. 
		Moreover, there  exists  an SLD-derivation of
		\begin{align}
			\Big(	\bigwedge_{i\in [k-1]} R_{a_i}(x_i,x_{i+1})  \Big)   \wedge\  R_{B}(x_{k},x_{\ell+1})  \nonumber \\ \wedge \bigwedge_{i\in [n]\setminus [\ell]} R_{a_i}(x_i,x_{i+1})  
			\Rightarrow R_A(x_1,x_{n+1})  \label{eq:toshow3} 
		\end{align} 
		from $\Phi_{1}$ of length $\leq \lambda$.
		By the induction hypothesis, \eqref{eq:toshow3} is equivalent to $A \rightarrow^{\ast}_{G} a_1 \dots a_{k-1} B a_{\ell+1}\dots a_{n-1}$. Therefore, $A \rightarrow^{\ast}_{G}  a_1,\dots, a_n$. %The argument for $A=S$ is analogous. 
	\end{proof}

	\paragraph{Creating candidates for failure of the JEP.}
	Let $\tau_{2}$ be the signature which contains  all symbols from $\tau_1$ except for the ones coming from $N$ and additionally the unary symbol $U$ and the binary symbol $Q$.
	The sentence $\Phi_2$ consists of the following Horn clauses for every $a \in \Sigma$:
	\begin{align}   
		U(y)  \wedge I(x_1)  \Rightarrow\ & Q(y,x_1) \label{eq:deriving_Q_from_S} \\ 
		U(y) \wedge Q(y,x_1)  \wedge R_{a}(x_1,x_{2})    \Rightarrow\ & Q(y,x_{2}) \label{eq:deriving_Q_from_S_and_Q}  \\ 
		U(y) \wedge Q(y,x_{1}) \wedge  R_{a}(x_{1},x_{2})  \wedge T(x_{2})    \Rightarrow \ &  \bot  \label{eq:deriving_bot_from_T_and_Q}
	\end{align} 
	
	\begin{figure}[t]
		\begin{center}
			\begin{tikzpicture}[scale=0.5]  
				\foreach \phi in {1,...,6}{
					\node[minimum size=7pt,inner sep=1pt,outer sep=1pt] (x\phi) at (360/8 * \the\numexpr\phi+0.5: 2.65cm) {$x_\phi$};} 
				\foreach \phi in {1,...,5}{
					\node[minimum size=7pt,inner sep=1pt,outer sep=1pt] (r\phi) at (360/8 * \the\numexpr\phi+1: 3.75cm) {$R_{a_\the\numexpr\phi}$};} 
				\node[minimum size=7pt,inner sep=1pt,outer sep=1pt] (y) at (360/8 * 0: 1.025cm) {$y$}; 
				\draw [opacity=1, draw=blue,-,line width=0.7mm,double distance = 0.55cm,line cap=round] plot [smooth,tension=0 ] coordinates {  (x1) (x1) }; 
				\draw [opacity=1, draw=red,-,line width=0.7mm,double distance = 0.55cm,line cap=round] plot [smooth,tension=0 ] coordinates {  (x6) (x6) }; 
				\draw [opacity=1, draw=darkgreen,-,line width=0.7mm,double distance = 0.55cm,line cap=round] plot [smooth,tension=0 ] coordinates {  (y) (y) }; 
				\foreach \phi in {1,...,5}{
					\draw [opacity=1, draw=black,-,line width=0.3mm,double distance = 0.5cm,line cap=round] plot [smooth,tension=0 ] coordinates {  (x\phi) (x\the\numexpr\phi+1) }; } 
				\foreach \phi in {1,...,6}{
					\node[ minimum size=7pt,inner sep=1pt,outer sep=1pt] (x\phi) at (360/8 * \the\numexpr\phi+0.5: 2.65cm) {$x_\phi$};} 
				\node[ minimum size=7pt,inner sep=1pt,outer sep=1pt] (y) at (360/8 * 0: 1.025cm) {$y$}; 
				\node[ minimum size=7pt,color=blue, inner sep=1pt,outer sep=1pt] (y) at (360/8 * 1.2: 3.6cm) {$I$};
				\node[ minimum size=7pt,color=darkgreen,inner sep=1pt,outer sep=1pt] (y) at (360/8 * 0: 2.2cm) {$U$};
				\node[ minimum size=7pt,color=red,inner sep=1pt,outer sep=1pt] (y) at (360/8 * -1.2: 3.6cm) {$T$}; 
			\end{tikzpicture} 
			\hspace{3em}\raisebox{6em}{{\Large $\overset{\Phi_2\,}{\Longrightarrow}$}} \hspace{2em}
			\begin{tikzpicture}[scale=0.5]  
				\foreach \phi in {1,...,6}{
					\node[minimum size=7pt,inner sep=1pt,outer sep=1pt] (x\phi) at (360/8 * \the\numexpr\phi+0.5: 2.65cm) {$x_\phi$};} 
				\foreach \phi in {1,...,5}{
					\node[minimum size=7pt,inner sep=1pt,outer sep=1pt] (r\phi) at (360/8 * \the\numexpr\phi+1: 3.75cm) {$R_{a_\the\numexpr\phi}$};} 
				\node[minimum size=7pt,inner sep=1pt,outer sep=1pt] (y) at (360/8 * 0: 1.025cm) {$y$}; 
				\draw [opacity=1, draw=blue,-,line width=0.7mm,double distance = 0.55cm,line cap=round] plot [smooth,tension=0 ] coordinates {  (x1) (x1) }; 
				\draw [opacity=1, draw=red,-,line width=0.7mm,double distance = 0.55cm,line cap=round] plot [smooth,tension=0 ] coordinates {  (x6) (x6) }; 
				\draw [opacity=1, draw=darkgreen,-,line width=0.7mm,double distance = 0.55cm,line cap=round] plot [smooth,tension=0 ] coordinates {  (y) (y) }; 
				\foreach \phi in {1,...,5}{
					\draw [opacity=1, draw=black,-,line width=0.3mm,double distance = 0.5cm,line cap=round] plot [smooth,tension=0 ] coordinates {  (x\phi) (x\the\numexpr\phi+1) }; } 
				\foreach \psi in {1,...,5}{ 
					\draw [opacity=1, draw=purple,-,line width=0.3mm,double distance = 0.55cm,line cap=round] plot [smooth,tension=2 ] coordinates {  (y) (x\psi) }; }  
				\foreach \phi in {1,...,6}{
					\node[ minimum size=7pt,inner sep=1pt,outer sep=1pt] (x\phi) at (360/8 * \the\numexpr\phi+0.5: 2.65cm) {$x_\phi$};} 
				\foreach \phi in {2,...,5}{
					\node[ color=purple, minimum size=7pt,inner sep=1pt,outer sep=1pt] (x\phi) at (360/6.75 * \the\numexpr\phi-1.9 : 1.5cm) {$Q$};} 
				\node[ color=purple, minimum size=7pt,inner sep=1pt,outer sep=1pt] (x1) at (360/6.75 * 1.05 : 1.8cm) {$Q$}; 
				\node[ minimum size=7pt,inner sep=1pt,outer sep=1pt] (y) at (360/8 * 0: 1.025cm) {$y$}; 
				\node[ minimum size=7pt,inner sep=1pt,color=blue,outer sep=1pt] (y) at (360/8 * 1.2: 3.6cm) {$I$};
				\node[ minimum size=7pt,inner sep=1pt,color=darkgreen,outer sep=1pt] (y) at (360/8 * 0: 2.2cm) {$U$};
				\node[ minimum size=7pt,inner sep=1pt,color=red,outer sep=1pt] (y) at (360/8 * -1.2: 3.6cm) {$T$}; 
			\end{tikzpicture} 
		\end{center}
		\caption{An illustration of the situation in Claim~\ref{correspondence_mod} for $n=6$.}
		\label{fig:failures}
	\end{figure}
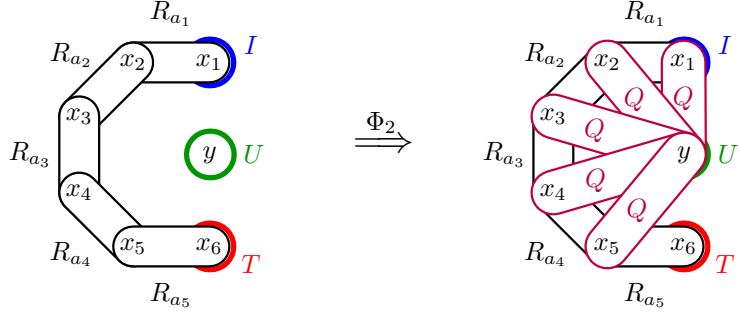
	
	The proof of the following claim is left to the reader (a straightforward consequence of Theorem~\ref{SLD-deduction}), see Figure~\ref{fig:failures} for an illustration of Claim~\ref{correspondence_mod}. 
	\begin{myclaim}  \label{correspondence_mod} 
		Let $\phi(\boldf{x})$ be a conjunction of $(\tau_2\setminus \{Q\})$-atoms.
		Then  $$\Phi_2 \models \forall \boldf{x} \big( \phi(\boldf{x}) \Rightarrow \bot \big)$$ if and only if there exists $a_1\dots a_n \in \Sigma^{+}$ such that $\phi$ has a subformula of the form 
		\begin{align}  
			U(y) \wedge I(x_1)\wedge  T(x_{n+1}) \wedge 
			\bigwedge_{i\in [n]}  R_{a_{i}}(x_i,x_{i+1})     \label{eq:goal_mechanism} 
		\end{align} 
	where the variables need not be distinct.
	\end{myclaim}

	Now we are ready for the main part of the proof. We set $\Phi\coloneqq \Phi_1\wedge \Phi_2$. 
	
	\begin{myclaim}  $\Models{\Phi}$  has the JEP if and only if  $L(G) = \Sigma^{+}$.
		\label{claim:final}
	\end{myclaim} 
	\begin{proof} 	``$\Rightarrow$'': Suppose that $\Models{\Phi}$ has the JEP. 
		Let $a_1\dots a_{n}\in \Sigma^{+}$.  
		Consider the formulas  $\phi_1(y)$ and $\phi_2(x_1,\dots,x_{n+1})$ given by 
		\begin{align*} 
			\phi_1 \coloneqq \ & U(y) \\
			\text{and} \quad \phi_2 \coloneqq \ &  I(x_1)\wedge  T(x_{n+1}) \wedge 
			\bigwedge_{i\in [n]}  R_{a_{i}}(x_i,x_{i+1}) .
		\end{align*} 
		By Claim~\ref{correspondence_mod}, we have 
		\begin{align}  \Phi_2 \models \forall x_1,\dots,x_{n+1},y \big(     \phi_1 \wedge \phi_2  \Rightarrow \bot \big). \label{eq:assumption} 
		\end{align}  
		Since $\Models{\Phi}$ has the JEP, by Lemma~\ref{charact_jep_horn}\eqref{eq:lemma1_item2}, we have that $\Phi \wedge \phi_1$ or $\Phi \wedge \phi_2$ is not satisfiable.

		Note that every Horn clause in $\Phi_1$ contains an $R_a$-atom for some $a\in N\cup \Sigma$ in its premise, and $\phi_1$ contains none.  
		Also note that every Horn clause in $\Phi_2$ contains an $I$-atom or an $R_a$-atom for some $a\in N\cup \Sigma$ in its premise, and $\phi_1$ contains none. 
		Hence, the $\tau$-structure with domain $\{y\}$ and whose relations are described by the atomic formula $U(y)$ is closed under application of Horn clauses from $\Phi$, i.e., $\Phi \wedge \phi_1$ is satisfiable.
		Since one of the formulas $\Phi \wedge \phi_1$ and $\Phi \wedge \phi_2$ is not satisfiable, we must have
		\begin{align} 
		\Phi \models \forall x_1,\dots, x_{n+1} \big( \phi_2 \Rightarrow \bot   \big). \label{eq:pruning1}
\end{align}
		
		Next, note that every Horn clause in $\Phi_2$ contains a $U$-atom in its premise, $\phi_2$ contains no $U$-atom, and no Horn clause in $\Phi_1$ contains a $U$-atom in its conclusion.
	    We claim that then
	    \begin{align} 
	    \Phi_1 \models \forall x_1,\dots, x_{n+1} \big( \phi_2 \Rightarrow \bot   \big). \label{eq:pruning2}
	\end{align}
	    Suppose, on the contrary, that $\Phi_1 \wedge \phi_2$ is satisfiable. Let $\struct{A}$ be any finite model of $\Phi_1 \wedge \phi_2$.
	    We assume that $U^{\struct{A}}=\emptyset$, otherwise we remove all elements from $U^{\struct{A}}$.
	    Then $\struct{A}$ still satisfies $\phi_2$ because $\phi_2$ does not contain any $U$-atoms, and $\struct{A}$  remains closed under application of Horn clauses from $\Phi_1$ because no Horn clause in $\Phi_1$ contains a $U$-atom in its conclusion.
	    But then $\struct{A}$ is also closed under application of Horn clauses from $\Phi_2$ because every Horn clause in $\Phi_2$ contains a $U$-atom in its premise.
	    We conclude that $\struct{A} \models \Phi \wedge \phi_2$, a contradiction to \eqref{eq:pruning1}.
	    Hence \eqref{eq:pruning2} holds.   

		Finally, note that \eqref{eq:trans-s-inductive} is the only Horn clause in $\Phi_1$ which contains  an $I$- or $T$-atom in its premise, and that no Horn clause in $\Phi_1$ contains an $I$- or $T$-atom in its conclusion.
		 We claim that then
        \begin{align} 
        		\Phi_1 \models \forall x_1,\dots, x_{n+1} \Big( 	\bigwedge_{i\in [n]}  R_{a_{i}}(x_i,x_{i+1})  \Rightarrow R_S(x_1,x_{n+1})   \Big). \label{eq:pruning3}
        \end{align}
        Suppose, on the contrary, that \eqref{eq:pruning3} does not hold.
        Then there exists a finite $\tau$-structure $\struct{A}$ satisfying $\Phi_1 \wedge \bigwedge_{i\in [n]}  R_{a_{i}}(x_i,x_{i+1})\wedge \neg R_S(x_1,x_{n+1})$.
        We may assume that $A= \{x_1,\dots, x_{n+1}\}$ because the models of a universal sentence are always closed under taking substructures.
        We may also assume that $I^{\struct{A}}=T^{\struct{A}}=\emptyset$, otherwise we remove all elements from these relations.
        Then $\struct{A}$ remains closed under application of Horn clauses from $\Phi_1$ because no Horn clause in $\Phi_1$ contains an $I$- or $T$-atom in its conclusion.
        Now consider the structure $\struct{A}'$ obtained from $\struct{A}$ by adding $x_1$ to $I^{\struct{A}}$ and $x_{n+1}$ to $T^{\struct{A}}$.
        Then $\struct{A}'$ is closed under application of Horn clauses from $\Phi_1$ because  \eqref{eq:trans-s-inductive} is the only Horn clause in $\Phi_1$ that has an $I$- or $T$-atom in its premise and it cannot be applied to $\struct{A}'$ since $(x_1,x_{n+1}) \notin R_S^{\struct{A}'}$.
        But then $\struct{A}'\models \Phi_1 \wedge \phi_2$, a contradiction to \eqref{eq:pruning2}.
        Hence \eqref{eq:pruning3} holds. 
		
		Now it follows from Claim~\ref{correspondence} and \eqref{eq:pruning3} that $a_1\dots a_n \in L(G)$ and we are done.
		
		``$\Leftarrow$'':  
		We prove the contrapositive and assume that $\Models{\Phi}$ does not have the JEP.
		Then there exists a counterexample to   Lemma~\ref{charact_jep_horn}\eqref{eq:lemma1_item2}, i.e., 
		there exists a Horn clause $\psi$ of the following form
		$$   \phi_1(\boldf{x}_1) \wedge \phi_2(\boldf{x}_2) \Rightarrow \chi$$
		where $\chi$ is either $\bot$ or an atomic $\tau$-formula 
		with free variables among $\boldf{x}_1$  
		such that  
		\begin{align}   
			&\Phi \models   \forall \boldf{x}_1,\boldf{x}_2\big(  \phi_1 \wedge \phi_2 \Rightarrow \chi \big), \label{eq:badrule}  \\ 
			&\Phi \centernot{\models}   \forall \boldf{x}_1\big( \phi_1 \Rightarrow \chi\big), \label{eq:violation}
		\end{align}
		and, for both $i \in \{1,2\}$,  
		\begin{align}  
			\Phi \centernot{\models}	\forall \boldf{x}_i \big( \phi_i   \Rightarrow \bot \big).  \label{eq:ass}
		\end{align} 
		We choose $\psi$ minimal with respect to  the number of its atomic subformulas. 
		
		Our proof strategy is as follows. 
		First we show that $\psi$ encodes a single word $w\in \Sigma^+$ in the sense of Claim~\ref{correspondence_mod}.
		Then we show that the word $w$ may not be contained in $L(G)$, because otherwise a part of the counterexample would encode $w$ in the sense of Claim~\ref{correspondence} which would lead to a contradiction.

		\begin{observation} \label{obs:B}  $\psi$ has an SLD-deduction from $\Phi_2$, and only contains symbols from $\tau_2$.
		\end{observation}   
		\begin{proof}[of Observation~\ref{obs:B}]   
		 %
          %Note that $Q\notin \tau_1$ and $R_A \notin \tau_2$ for every $A\in N$. 
          First, we claim that $\Phi_1 \vdash\psi$ or $\Phi_2 \vdash \psi$. 
		 By  Theorem~\ref{SLD-deduction}, we have $\Phi \vdash \psi$.  
		 Note that $\chi(\boldf{x}_1)$ cannot be a subformula of 
		 $   \phi_1(\boldf{x}_1)$, by \eqref{eq:violation}.  
		 Also note that $\chi(\boldf{x}_1)$ cannot be a subformula of $\phi_2(\boldf{x}_2)$ as these two formulas have no common variables.
		 Hence, $\chi(\boldf{x}_1)$ is not a subformula of $  \phi_1(\boldf{x}_1) \wedge \phi_2(\boldf{x}_2)$, i.e., $\psi$ is not a tautology.
		 Let $\psi'$ be a Horn clause such that $\psi$ is a weakening of $\psi'$ and $\psi'$ has an SLD-derivation $\psi'_{0},\dots,\psi'_{k}=\psi'$  from $\Phi$.
		 Note that the Horn clauses in $\Phi$ have the property that, depending on whether they come from $\Phi_1$ or from $\Phi_2$, they either contain no $Q$-atoms or no $R_A$-atoms for $A\in N$.
		 This applies in particular to $\psi'_{0}$ which is a conjunct from $\Phi$.
		 Since the conclusion of each Horn clause in $\Phi_1$ is an $R_A$-atom for $A\in N$ and the conclusion of each Horn clause in $\Phi_2$ is a $Q$-atom, the property of $\psi'_{0}$ from above propagates inductively to every $\psi'_i$ for $i\in [k]$.
		 But this means that $\psi'$ has an SLD-derivation from $\Phi_1$ or from $\Phi_2$.
		 Hence, $\psi$ has an SLD-deduction from $\Phi_1$ or from $\Phi_2$, which concludes the claim.

			Next, we claim that $\Phi_2 \vdash \psi$. Suppose, on the contrary, that $\Phi_1 \vdash \psi$.
			Let $\phi'_1$ and $\phi'_2$ be the formulas obtained from $\phi_1$ and $\phi_2$, respectively, by removing all $Q$-atoms. 
			Since $\Phi_1 \vdash \psi$, the SLD-derivation sequence $\psi'_{0},\dots,\psi'_{k}$ from the paragraph above contains no $Q$-atoms.
			Thus, all $Q$-atoms occurring in $\psi$ come from the weakening step,  which means that  
			\begin{align} 
				\Phi_{1} \vdash \forall \boldf{x}_1,\boldf{x}_2\big(\phi_1' \wedge \phi_2' \Rightarrow \chi \big). \label{eq:G2_reduct}
			\end{align}
			Since $\psi'_{0}$ is a Horn clause from $\Phi_1$, it follows from the minimality assumption for $\psi$ that  either $\chi$ is an $R_{A}$-atom for some $A\in N$, or $\chi$ is $\bot$.
			In both cases, \eqref{eq:ass}, \eqref{eq:G2_reduct}, and \eqref{eq:violation} witness that $\Models{\Phi_1}$ does not have JEP through an application of Lemma~\ref{charact_jep_horn}.
			But this is in contradiction to Proposition~\ref{connected}. Thus, $\Phi_1 \vdash \psi$ does not hold, and $\Phi_2 \vdash \psi$ holds instead. This concludes the claim.

			Since $\Phi_2 \vdash \psi$, the premise $\phi_1 \wedge \phi_2$ of $\psi$ can only contain symbols from $\tau_2$,  otherwise we could remove all $(\tau_1 \setminus \tau_2)$-atoms and get a contradiction to the minimality of $\psi$.   
			Since $\psi'_{0}$ is a Horn clause from $\Phi_2$, it also follows from the minimality assumption for $\psi$ that  either $\chi$ is an $Q$-atom, or $\chi$ is $\bot$. 
			Thus, $\psi$ only contains symbols from $\tau_2$.
		\end{proof}
		\begin{observation}  \label{obs:C} $  \phi_1  \wedge \phi_2 $ does not contain any $Q$-atoms, and $\chi$ equals $\bot$. 
			%
			%\red{Moreover, %we can always choose $\chi$ as $ \bot$ independently of $\phi(\boldf{x})\wedge \phi_1(\boldf{x},y_1) \wedge \phi_2(\boldf{x},y_2)$. 
			% \eqref{eq:badrule} and~\eqref{eq:violation} also hold if $\chi$ is replaced by $\bot$. }
			%    	 Moreover, we  may assume that, for every choice of a minimal counterexample $\phi(\boldf{x})\wedge \phi_1(\boldf{x},y_1)\wedge \phi_2(\boldf{x},y_2)$, $\chi$ equals $\bot$ in  \eqref{eq:badrule}  and \eqref{eq:violation}.}
		\end{observation}   
		\begin{proof}[of Observation~\ref{obs:C}] 
			By Observation~\ref{obs:B}, $\psi$ has an SLD-deduction from $\Phi_2$.
			Recall from the proof of Observation~\ref{obs:B} that $\psi$ cannot be a tautology.
			%   By the minimality of our counterexample, no $Q$-atoms may be present in the premise of $\psi$ due to the assumption of minimality of our counterexample as their removal  preserves
			%   \eqref{eq:ass}, \eqref{eq:badrule}, and \eqref{eq:violation}.
			By the minimality of $\psi$, we %therefore 
			may
			assume that there exists an SLD-derivation of $\psi$ from $\Phi_{2}$.  
			Consider any SLD-derivation $\psi_{0},\dots, \psi_{k}$ of $\psi$ from $\Phi_2$.
			Note that, by the construction of $\Phi_2$, for every $i\in[k]$, if there exists a variable $y$ in $\psi_{i-1}$ such that
			\begin{enumerate}%[label=\roman*.,ref=\roman*]
				\item \label{eq:item1} every $Q$-atom contains   $y$ in its first argument,  
				\item \label{eq:item2} every $U$-atom contains   $y$ in its only argument,
			\end{enumerate}  
			then $\psi_{i}$ also satisfies  item~\ref{eq:item1} and  item~\ref{eq:item2} for the same variable $y$.
			Since every possible choice of  $\psi_0$ from $\Phi_2$ initially satisfies these two conditions, it follows via induction that item~\ref{eq:item1} and item~\ref{eq:item2} must hold for $\psi = \psi_{k}$ for some $y$. 
			Also note that \eqref{eq:deriving_Q_from_S} is the only Horn clause in $\Phi_2$ that is not connected, 	but the lack of connectivity is only because the variable $y$ satisfying item~\ref{eq:item1} and item~\ref{eq:item2} is isolated from the remaining variables.
			It follows by induction that this is also true for $\psi$.
				%the only reason why $\psi=\psi_k$ might not be connected is that the variable $y$ that satisfies item~\ref{eq:item1} and item~\ref{eq:item2} for $\psi$ might be isolated from the remaining variables.

		 We claim that $\psi_0$ is of the form \eqref{eq:deriving_bot_from_T_and_Q}. 
				Suppose, on the contrary, that $\psi_0$ is of the form \eqref{eq:deriving_Q_from_S_and_Q} or \eqref{eq:deriving_Q_from_S}. Then the conclusion of $\psi$ is a $Q$-atom.  
				By our assumption, the conclusion of $\psi$ may only contain variables from $\boldf{x}_1$.
				Thus, also the variable $y$ that satisfies item~\ref{eq:item1} and item~\ref{eq:item2} for $\psi$ is contained in $\boldf{x}_1$.
				Since the second variable in the conclusion of $\psi$ is connected to all remaining variables and the variables of $\phi_1$ and $\phi_2$ are disjoint, $\phi_2$ must be the empty conjunction.
				This leads to a contradiction to \eqref{eq:violation}.
				Thus our claim holds.
				The claim implies that $\chi$ equals $\bot$.

				Since $y$ satisfies item~\ref{eq:item1} for $\psi$, if  $\psi$ contains any $Q$-atom in the premise, then $\psi$ is connected.
				But then, since the variables of $\phi_1$ and $\phi_2$ are disjoint while $\psi$ is connected, either $\phi_1$ or $\phi_2$ must be the empty conjunction.
				Since $\chi$ equals $\bot$, this leads to a contradiction to \eqref{eq:ass}.
			Thus,  $\psi$ does not contain any $Q$-atoms at all.
		\end{proof}	
	 
		As a consequence of   
		Observation~\ref{obs:B} and Observation~\ref{obs:C}
		we have that $\Phi_2 \models \psi$
		where $\psi$ is of the form $ \phi_1 \wedge \phi_2 \Rightarrow \bot$
		and $ \phi_1,\phi_2$ are conjunctions of atomic $\tau_2 \setminus \{Q\}$-formulas. 
		Therefore, Claim~\ref{correspondence_mod}  implies that there exists $a_1 \dots  a_n \in \Sigma^{+}$ such that $ \phi_1 \wedge \phi_2$ is of the form 
		\begin{align}
			U(y) \wedge I(x_1)\wedge  T(x_{n+1}) \wedge 
			\bigwedge_{i\in [n]}  R_{a_{i}}(x_i,x_{i+1})     \label{eq:final_form}
		\end{align}  
		where the variables need not all be distinct.  
		More specifically,  
		\begin{itemize}
			\item $\phi_1(\boldf{x}_1)$ equals $U(y)$, and
			\item $\phi_2(\boldf{x}_2)$ equals $I(x_1)\wedge  T(x_{n+1}) \wedge 
			\bigwedge_{i\in [n]}  R_{a_{i}}(x_i,x_{i+1})  $.
		\end{itemize} 
		Note that, if $L(G)=\Sigma^{+}$, 
		then Claim~\ref{correspondence} together with \eqref{eq:trans-s-inductive} implies that
		\begin{align} \Phi_{1} \models \forall x_1,\dots, x_{n+1}\big(  \phi_2   \Rightarrow \bot \big). \label{eq:second_grammar}
		\end{align} 
		If some variables among  $x_1,\dots, x_{n+1}$ are identified in \eqref{eq:final_form}, then we still have \eqref{eq:second_grammar} even if we perform the same identification of variables.
		But then we get a contradiction to \eqref{eq:ass}.
		Thus,  $L(G) \neq \Sigma^{+}$, which concludes the proof of Claim~\ref{claim:final}.   
	\end{proof}	 
	
	We have thus found a reduction from the undecidable universality problem for $G$ to the decidability problem of the JEP for $\Models{\Phi}$; note that $\Phi$ is universal Horn and can be computed from $G$ in polynomial time. 
\end{proof}

\section{Open Problems}
An important open question about universal sentences over relational signatures is the decidability of the \emph{amalgamation property (AP)}, 
which is a strong form of the joint embedding property, and which is of fundamental importance in constraint satisfaction~\cite{Book}. 
Unlike the JEP,  the AP is decidable if all relation symbols in the signature are at most binary (see, e.g.,~\cite{KnightLachlan,BodirskyKnaeuerStarke}). 
For general relational signatures, we ask the following question.

\begin{question}
	Is it decidable whether the class of finite models of a given universal Horn sentence $\Phi$ has the amalgamation property? 
\end{question}

%Note that if the class of finite models of $\Phi$ does have amalgamation, then we can compute an amalgam syntactically by computing implied atomic formulas just as in the proof of Lemma~\ref{charact_jep_horn}. 

\section*{Acknowledgments}{We thank the anonymous referees for their comments, in particular for one that was leading to the correct formulation of Proposition~\ref{prop:equivalence_horn}.}

\nocite{*}
%\bibliographystyle{abbrvnat}
% use the following instead if you encounter problems 
\bibliographystyle{alpha}
\bibliography{jep_horn}

\begin{thebibliography}{NCdW97}

\bibitem[BKS20]{BodirskyKnaeuerStarke}
M.~Bodirsky, S.~Kn{\"{a}}uer, and F.~Starke.
\newblock {ASNP:} {A} tame fragment of existential second-order logic.
\newblock In Marcella Anselmo, Gianluca~Della Vedova, Florin Manea, and Arno
  Pauly, editors, {\em 16th Conference on Computability in Europe (CiE 2020)},
  volume 12098 of {\em Lecture Notes in Computer Science}, pages 149--162.
  Springer, 2020.

\bibitem[Bod21]{Book}
M.~Bodirsky.
\newblock {\em Complexity of Infinite-Domain Constraint Satisfaction}.
\newblock Lecture Notes in Logic. Cambridge University Press, 2021.

\bibitem[BR21]{PReport}
F.~{Baader} and J.~{Rydval}.
\newblock An algebraic view on p-admissible concrete domains for lightweight
  description logics.
\newblock In Wolfgang Faber, Gerhard Friedrich, and Martin Gebser, editors,
  {\em 17th European Conference on Logics in Artificial Intelligence ({JELIA}
  2021)}, volume 12678 of {\em Lecture Notes in Artificial Intelligence}, pages
  194--209. Springer, 2021.
\newblock Long version available at
  \url{https://tu-dresden.de/inf/lat/reports#BaRy-LTCS-20-10}.

\bibitem[Bra19]{BraunfeldUndec}
S.~Braunfeld.
\newblock The undecidability of joint embedding and joint homomorphism for
  hereditary graph classes.
\newblock {\em Discrete Mathematics \& Theoretical Computer Science}, 21, 2019.
\newblock Preprint available at arXiv:1903.11932.

\bibitem[Bul17]{bulatov2017dichotomy}
Andrei~A Bulatov.
\newblock A dichotomy theorem for nonuniform {CSPs}.
\newblock In Chris Umans, editor, {\em 58th IEEE Annual Symposium on
  Foundations of Computer Science ({FOCS} 2017)}, pages 319--330. {IEEE}
  Computer Society, 2017.

\bibitem[Com83]{compton1983some}
K.~J. Compton.
\newblock Some useful preservation theorems.
\newblock {\em The Journal of Symbolic Logic}, 48(2):427--440, 1983.

\bibitem[Har78]{Harrison1978}
M.~A. Harrison.
\newblock {\em Introduction to Formal Language Theory}.
\newblock Addison-Wesley Longman Publishing Co., Inc., USA, 1st edition, 1978.

\bibitem[Hod93]{HodgesLong}
W.~Hodges.
\newblock {\em Model theory}.
\newblock Cambridge University Press, 1993.

\bibitem[KL87]{KnightLachlan}
J.~Knight and A.H. Lachlan.
\newblock Shrinking, stretching and codes for homogeneous structures.
\newblock In {\em Classification Theory (Chicago, IL, 1985)}, Lecture Notes in
  Math., 1292, pages 192--228. Springer, Berlin-New York, 1987.

\bibitem[NCdW97]{NienhuysW97}
S.-H. Nienhuys-Cheng and R.~de~Wolf.
\newblock {\em Foundations of Inductive Logic Programming}, volume 1228 of {\em
  Lecture Notes in Computer Science}.
\newblock Springer, 1997.

\bibitem[Zhu17]{zhuk2017proof}
Dmitriy Zhuk.
\newblock A proof of the {CSP} dichotomy conjecture.
\newblock In Chris Umans, editor, {\em 58th IEEE Annual Symposium on
  Foundations of Computer Science ({FOCS} 2017)}, pages 331--342. {IEEE}
  Computer Society, 2017.

\end{thebibliography}
\label{sec:biblio}

\end{document}